\definecolor{myBlue}{HTML}{1f77b4}
\definecolor{myGreen}{HTML}{2ca02c}
\definecolor{myRed}{HTML}{d62728}
\definecolor{myBrown}{HTML}{8b4513}
\definecolor{SecondaryLightBlue}{HTML}{aec7e8}
\tikzset{
    bm/.style={
        decoration={snake, amplitude=0.4mm, segment length=4mm},
        decorate,
        line width=1pt,
        color=myGreen
    },
    bmtwo/.style={
        decoration={zigzag, amplitude=1mm, segment length=3mm},
        decorate,
        line width=1pt,
        color=myBrown
    }
}
\definecolor{mygreen}{RGB}{10,150,110}
\definecolor{myred}{RGB}{150,10,20}
\title{A $0.51$-Approximation of Maximum Matching \\ in Sublinear $n^{1.5}$ Time}
\author{Sepideh Mahabadi\thanks{Microsoft Research. E-mail: \email{smahabadi@microsoft.com}.} \and Mohammad Roghani\thanks{Stanford University. E-mail: \email{roghani@stanford.edu}. Work done while the author was an intern at Microsoft Research.} \and Jakub Tarnawski\thanks{Microsoft Research. E-mail: \email{jakub.tarnawski@microsoft.com}.}}
\date{}
\renewcommand{\epsilon}{\varepsilon}
\DeclareMathOperator{\E}{\ensuremath{\normalfont \textbf{E}}}
\newcommand{\hiddencomment}[1]{}
\newcommand{\GMM}[0]{\ensuremath{\textup{GMM}}}
\newcommand{\RGMM}[0]{\ensuremath{\textup{RGMM}}}
\newcommand{\wt}[1]{\ensuremath{\widetilde{#1}}}
\newcommand{\Touter}{\ensuremath{T_{\mathrm{outer}}}}
\newcommand{\Tinner}{\ensuremath{T_{\mathrm{inner}}}}
\newcommand{\costinner}{\ensuremath{\mathrm{cost}_{\mathrm{inner}}}}
\newcommand{\ceil}[1]{{\left\lceil{#1}\right\rceil}}
\DeclareMathOperator*{\Prob}{\ensuremath{\textnormal{Pr}}}
\renewcommand{\Pr}{\Prob}
\crefname{lemma}{Lemma}{Lemmas}
\crefname{theorem}{Theorem}{Theorems}
\crefname{property}{Property}{Properties}
\crefname{claim}{Claim}{Claims}
\crefname{result}{Result}{Results}
\crefname{definition}{Definition}{Definitions}
\crefname{observation}{Observation}{Observations}
\crefname{proposition}{Proposition}{Propositions}
\crefname{assumption}{Assumption}{Assumptions}
\crefname{line}{Line}{Lines}
\crefname{figure}{Figure}{Figures}
\crefname{equation}{}{}
\crefname{section}{Section}{Sections}
\crefname{appendix}{Appendix}{Appendices}
\crefname{algCounter}{Algorithm}{Algorithms}
\Crefname{algCounter}{Algorithm}{Algorithms}
\newtheorem{lemma}{Lemma}[section]
\newtheorem{theorem}[lemma]{Theorem}
\newtheorem{proposition}[lemma]{Proposition}
\newtheorem{claim}[lemma]{Claim}
\newtheorem{observation}[lemma]{Observation}
\newtheorem*{remark*}{Remark}
\definecolor{mylightgray}{RGB}{230,230,230}
\algnewcommand{\IIf}[2]{\textbf{if} #1 \textbf{then} #2}
\algnewcommand{\EndIIf}{\unskip\ \algorithmicend\ \algorithmicif}
\newenvironment{whitetbox}{
\par\addvspace{0.1cm}
\begin{tcolorbox}[width=\textwidth,
                  boxsep=5pt,
                  left=1pt,
                  right=1pt,
                  top=2pt,
                  bottom=2pt,
                  boxrule=1pt,
                  arc=0pt,
                  colframe=black,
                  colback=white
                  ]
}{
\end{tcolorbox}
}
\newcounter{algCounter}
\providecommand{\email}[1]{\href{mailto:#1}{\nolinkurl{#1}\xspace}}
\begin{document}

\maketitle

\begin{abstract}
We study the problem of estimating the size of a maximum matching in sublinear time. The problem has been studied extensively in the literature and various algorithms and lower bounds are known for it. Our result is a $0.5109$-approximation algorithm with a running time of $\tilde{O}(n\sqrt{n})$.
    
All previous algorithms either provide only a marginal improvement (e.g., $2^{-280}$) over the $0.5$-approximation that arises from estimating a \emph{maximal} matching, or have a running time that is nearly $n^2$. Our approach is also arguably much simpler than other algorithms beating $0.5$-approximation.
\end{abstract}

\section{Introduction}
Given a graph $G=(V,E)$, a matching is a set of edges with no common endpoints, and the maximum matching problem asks for finding a largest such subset. Matching is a fundamental combinatorial optimization problem,
and a benchmark for new algorithmic techniques in all major computational models.
It also has a wide range of applications such as ad allocation, social network recommendations, and information retrieval, among others. Given that many of these applications need to handle large volumes of data, the study of sublinear time algorithms for {\em estimating} the maximum matching size has received considerable attention over the past two decades.\footnote{It is impossible to \emph{find the edges} of any constant-approximate matching in time sublinear with respect to the size of the input \cite{ParnasRon07}.} A sublinear time algorithm is not allowed to read the entire graph, which would take $\Omega(n^2)$ time where $n=|V|$; instead it  is provided \emph{oracle access} to the input graph and must run in $o(n^2)$ time. There are two main oracles for graph problems considered in the literature, which we also consider in this work.

\begin{itemize}
\item \textbf{Adjacency list oracle.} Here, the algorithm can query $(v,i)$, where $v\in V$ and $i\leq n$, and the oracle reports the $i$-th neighbor of the vertex $v$ in its adjacency list, or NULL if $i$ is larger than the number of $v$'s neighbors.
\item \textbf{Adjacency matrix oracle.} Here, the algorithm can query $(u,v)$, where $u,v\in V$, and the oracle reports whether there exists an edge between $u$ and $v$.
\end{itemize}

Earlier results on estimating the size of the maximum matching in sublinear time mostly focused on graphs with bounded degree $\Delta$, starting with the pioneering work of Parnas and Ron \cite{ParnasRon07}, and later works of \cite{nguyen2008constant, YoshidaYI09, RubinfeldTVX11, AlonRVX12, LeviRY17, Ghaffari-FOCS22}. However, for $\Delta=\Omega(n)$ these do not lead to sublinear time algorithms.
Thus, later works focused on general graphs with arbitrary maximum degree and managed to obtain sublinear time algorithms for them \cite{chen2020sublinear,kapralov2020space,Behnezhad21,BehnezhadRRS23, bhattacharya2023sublinear,bhattacharya2023dynamic}. In particular, the state of the art results can be categorized into two regimes:
\begin{itemize}
\item Algorithms that run in slightly sublinear time, i.e., $n^{2-\Omega_{\epsilon}(1)}$. For example, the works of \cite{behnezhad2023sublinear, bhattacharya2023sublinear} gave a $(2/3-\epsilon)$-approximation algorithm that runs in time $n^{2-\Omega_{\epsilon}(1)}$. Later, \cite{bhattacharya2023dynamic} improved the approximation factor to $1-\epsilon$.
\item Algorithms whose approximation factor is $0.5+\epsilon$. The state of the art result in this category is the work of \cite{BehnezhadRRS23}, whose running time is $n^{1+\epsilon}$ for an approximation factor of $0.5+\Omega_{\epsilon}(1)$. However, the best approximation factor one can get using their trade-off is only $0.5+2^{-280}.$\footnote{More specifically, for $\epsilon \in (0,1/4)$, they get an algorithm with approximation factor of $0.5+2^{-70/\epsilon}$ that runs in time $O(n^{1+\epsilon})$.} Indeed, they mention: \textit{We do not expect our techniques to lead to a better than, say, $0.51$-approximation in $n^{2-\Omega(1)}$ time.}
\end{itemize}
Thus, all known algorithms either give only a marginal improvement over the $0.5$-approximation
arising from estimating the size of a \emph{maximal} matching
(which can be done even in $\wt{O}(n)$ time~\cite{Behnezhad21}),
or have a running time that is nearly $n^2$.
Making a significant improvement on both fronts simultaneously
has remained an elusive open question.

\paragraph{Our results.} In this work, we show how to beat the  factor $0.5$ in time that is strongly sublinear. Specifically, we present an algorithm that runs in time $O(n\sqrt{n})$ and achieves an approximation factor of $0.5109$ for estimating the size of the maximum matching.

It is worth noting that our algorithm is much simpler, both in terms of implementation and analysis,  compared to the prior works that beat $0.5$-approximation~\cite{BehnezhadRRS23}.

\begin{restatable}{theorem}{maintheoremAdjlist}\label{thm:adj-list-theorem}
There exists an algorithm that, given access to the adjacency list of a graph, estimates the size of the maximum matching with a multiplicative approximation factor of $0.5109$ and runs in $\widetilde{O}(n\sqrt{n})$ time with high probability.
\end{restatable}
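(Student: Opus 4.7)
The plan is to combine two estimators: the size of a random greedy maximal matching (RGMM), and a count of matched edges that can be extended into length-$3$ augmenting paths. These together certify an approximation ratio strictly above $1/2$, and both can be implemented within the $\Ot(n\sqrt{n})$ budget.

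First I would use the RGMM oracle of Behnezhad: assign each edge an i.i.d.\ uniform rank in $[0,1]$ and consider the matching $M$ obtained by scanning edges in increasing order of rank and adding an edge whenever both endpoints are still free. The oracle answers ``is $v$ matched in $M$, and to whom?'' by locally unrolling the greedy process along lower-ranked edges out of $v$. Standard sampling of $\Ot(1/\epsilon^2)$ vertices then gives a $(1\pm\epsilon)$-estimate of $|M|$, well within the $\Ot(n\sqrt{n})$ budget, and already yields a $1/2$-approximation since $M$ is maximal.

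For the improvement beyond $1/2$, I would estimate how many matched edges can be extended into length-$3$ augmenting paths, i.e.\ tuples $u_1{-}v_1{-}v_2{-}u_2$ with $(v_1,v_2) \in M$ and $u_1,u_2 \notin V(M)$. Call such an edge \emph{augmentable}. For each sampled matched edge, I would probe its endpoints' neighbors to detect free vertices on both sides: for low-degree endpoints (degree $\le \sqrt{n}$) by enumeration, and for high-degree endpoints by random neighbor sampling together with the matching oracle. The degree-$\sqrt{n}$ split is exactly what balances the per-sample cost at $\Ot(\sqrt{n})$. To convert from ``augmentable edges'' to a genuine improvement in matching size, I would prove a packing lemma showing that a constant fraction of augmentable edges admit vertex-disjoint augmenting paths, by bounding, on average, how many augmentable edges each free vertex is incident to.

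The structural heart of the argument is a dichotomy. Either $|M| \ge 0.5109 \cdot \mu(G)$, in which case RGMM alone suffices; or $|M| < 0.5109 \cdot \mu(G)$, in which case the symmetric difference $M \oplus \mathrm{OPT}$ contains enough odd alternating paths incident to free vertices of $M$ to guarantee many length-$3$ augmenting paths (length-$1$ paths are ruled out by maximality, and a counting argument caps the mass carried by paths of length $\ge 5$). Balancing the two regimes, together with the loss from the packing step, produces the explicit constant $0.5109$. The main obstacle I expect is exactly this packing/disjointness step: counting augmentable edges is not enough, because free vertices may cluster around a few matched edges, so one needs a robust Hall-type or fractional argument that also tolerates the sampling error from the sublinear probes. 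The non-round constant $0.5109$ suggests that the tradeoff between the two regimes and the packing loss is what has to be optimized, and this is where the bulk of the technical effort will lie.
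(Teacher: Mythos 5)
Your proposal correctly identifies the high-level template (RGMM as a $1/2$-baseline; detecting length-$3$ augmenting paths to get above $1/2$; a degree-$\sqrt{n}$ threshold in the probing; a "large $M$ or many augmenting paths" dichotomy), but the pieces you flag as the bulk of the technical effort are exactly the pieces that are missing, and the paper resolves them with tools your outline does not contain.

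\textbf{The packing step.} You propose to count ``augmentable edges'' (matched edges with free neighbors on both sides) and then ``prove a packing lemma showing that a constant fraction of augmentable edges admit vertex-disjoint augmenting paths.'' This is the crux, and simply counting augmentable edges does not hand you a packing: free vertices can be shared among $\Theta(|M|)$ matched edges, so the number of augmentable edges is a gross overcount of the number of disjoint augmentations. The paper sidesteps this entirely by importing the $b$-matching framework of Konrad--Naidu: it runs a second greedy phase that builds a maximal \emph{$b$-matching} between matched and free vertices with capacities $k$ and $kb$, $b=1+\sqrt 2$. Because the $b$-matching caps how many times a free vertex can appear, its size is directly a fractional certificate of augmentable mass (\cref{lem:fractional_matching}), and the $(2-\sqrt 2)$ bound (\cref{lem:585}) is inherited off the shelf. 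Without this (or an equivalent capacity-limited structure), your Hall-type argument is not just hard --- it isn't sketched at all, and a naive count will fail on clustered instances.

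\textbf{The running-time mechanism.} Your plan to probe high-degree endpoints ``by random neighbor sampling together with the matching oracle'' runs straight into the biased-query problem the paper calls Challenge (4): the RGMM oracle's $\Ot(\bar d)$ cost is an expectation over a \emph{uniformly random} query vertex, but your probes repeatedly query the oracle at specific, possibly adversarial vertices whose cost can be much larger. The paper's fix has two ingredients you don't have: (i) an $\Ot(n\sqrt n)$ preprocessing \emph{sparsification} that explicitly materializes a partial matching $M$ so that the free-vertex induced subgraph has max degree $\sqrt n$ (\cref{lem:sparsification-maximal}) and membership in $V(M)$ is $O(1)$-checkable, and (ii) the structural result of \cite{steiner-tree-itcs} (\cref{prop:outdegree-bound-1,prop:outdegree-bound}) that the RGMM oracle visits each vertex proportionally to its degree, which is what allows charging the nested-oracle cost averaged over visited vertices rather than worst-case. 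Your degree split alone does not give $\Ot(n\sqrt n)$.

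\textbf{The dichotomy.} Your binary split ``$|M|$ large vs.\ $|M|$ small'' is not the one the paper uses, and in fact the paper's analysis shows it is insufficient: after sparsification the matching $M$ is not maximal, and it is extended to a maximal $M\cup M'$ via an RGMM oracle. The paper then considers two estimates --- augment $M'$ via a $b$-matching on the low-degree subgraph (Case 1), or augment $M$ via a $b$-matching on all of $V(M)\times(V\setminus V(M))$ (Case 2) --- and the $0.5109$ constant comes from optimizing a weighted average of those two certificates against a five-way partition of the optimal matching's edges (\cref{lem:apx051bipartite}), giving $\gamma=\tfrac{4(5-2\sqrt 2)}{17}$. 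The reason only $M'$ (not $M\cup M'$) is augmented in Case 1 is precisely to keep the nested oracles confined to the sparsified subgraph; your proposal does not reflect this constraint. Finally, you don't address the passage from additive to multiplicative error (the paper does this separately by lower-bounding $\mu(G)\geq n\bar d/(4\Delta)$ and scaling the number of samples), nor the LCA-based estimation of $\mu(M\cup B)$ needed for general (non-bipartite) graphs.
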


\begin{restatable}{theorem}{maintheoremAdjmat}\label{thm:adj-mat-theorem}
There exists an algorithm that, given access to the adjacency matrix of a graph, estimates the size of the maximum matching with a multiplicative-additive approximation factor of $(0.5109, o(n))$ and runs in $\widetilde{O}(n\sqrt{n})$ time with high probability.
\end{restatable}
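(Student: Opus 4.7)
The plan is to reduce Theorem~\ref{thm:adj-mat-theorem} to the adjacency-list version in Theorem~\ref{thm:adj-list-theorem} by simulating adjacency-list queries in the matrix model, using the $o(n)$ additive slack to absorb the contribution of very-low-degree vertices that cannot be handled cheaply.

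Fix a threshold $T = \widetilde{\Theta}(\sqrt{n})$. As preprocessing, for every vertex $v$ I would sample $\widetilde{O}(n/T) = \widetilde{O}(\sqrt{n})$ uniformly random entries of its row in the adjacency matrix; by a Chernoff bound this yields a constant-factor estimate of $\deg(v)$ with high probability, letting us classify $v$ as \emph{heavy} (degree at least $T/2$) or \emph{light} (degree at most $2T$) at total cost $\widetilde{O}(n \cdot n/T) = \widetilde{O}(n\sqrt{n})$ matrix queries. Any adjacency-list neighbor query to a heavy vertex can then be simulated in expected $\widetilde{O}(\sqrt{n})$ matrix queries by rejection sampling in its row: a uniformly random permutation of the adjacency list turns ``$i$-th neighbor'' queries into random-neighbor queries, which one produces by sampling random columns until a $1$ entry is found. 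The simulation fits within the $\widetilde{O}(n\sqrt{n})$ budget because the algorithm of Theorem~\ref{thm:adj-list-theorem} can be arranged to issue only $\widetilde{O}(n)$ neighbor queries in total, with its $\widetilde{O}(n\sqrt{n})$ running time dominated by internal computation rather than oracle access.

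The step I expect to be the main obstacle is controlling the error introduced by the light vertices: naively treating them as isolated yields an estimate of $\mu^*(G_L)$, where $G_L$ is $G$ with all edges incident to $L$ removed, and $\mu^*(G) - \mu^*(G_L)$ can be as large as $|L| = \Omega(n)$ (e.g., if $G$ is a disjoint union of perfect-matching edges, every vertex is light), which the $o(n)$ additive term cannot absorb. To overcome this I would add an auxiliary greedy pass on the light vertices: visit them in random order, and for each $v$ use $\widetilde{O}(n/T)$ rejection-sampling attempts on its row to try to discover an unmatched neighbor, extending a partial matching $M_L$; the total cost of this pass is at most $\widetilde{O}(|L| \cdot n/T) \le \widetilde{O}(n\sqrt{n})$. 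Combining $|M_L|$ with the $0.5109$-estimate returned by the simulated list algorithm on the residual graph (with $V(M_L)$ removed) yields the final estimate. A case analysis splitting the edges of an optimum matching $M^*$ by whether both, exactly one, or neither endpoint is heavy, and arguing that vertices whose degree is too small for the truncated sampling to ever find a neighbor contribute only $o(n)$ in aggregate, should then deliver the claimed $(0.5109, o(n))$-approximation. The delicate part is verifying that the estimate from the simulated list algorithm remains faithful on the residual graph, and that the constants from the three pieces ($M_L$, the heavy-heavy contribution, and the heavy-light crossings) cooperate to preserve the $0.5109$ factor.
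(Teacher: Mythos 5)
The approach you take is genuinely different from the paper's, and unfortunately it has two gaps that I believe are fatal.

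\paragraph{The query count of the adjacency-list algorithm is $\widetilde{\Theta}(n\sqrt{n})$, not $\widetilde{O}(n)$.} Your budget argument rests on the claim that ``the algorithm of Theorem~\ref{thm:adj-list-theorem} can be arranged to issue only $\widetilde{O}(n)$ neighbor queries in total, with its $\widetilde{O}(n\sqrt{n})$ running time dominated by internal computation.'' This is false. The sparsification step (\Cref{alg:sparsification}, \Cref{clm:sparsification-time}) already issues up to $c=\widetilde{O}(\sqrt{n})$ random-neighbor queries for each of the $n$ vertices, i.e.\ $\widetilde{\Theta}(n\sqrt{n})$ adjacency-list accesses, and the RGMM oracle calls add more. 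With your rejection-sampling simulation, each random-neighbor query to a heavy vertex of degree $d = \Theta(\sqrt{n})$ costs $\Theta(n/d)=\Theta(\sqrt{n})$ matrix probes in expectation. A concrete bad instance: take $A\cup B$ with $|A|=|B|=n/2$, where $A$ is a clique and every $b\in B$ has exactly $\sqrt{n}$ neighbors, all inside $A$. All of $A$ gets matched internally in the sparsification, every $b\in B$ is heavy, and $b$ exhausts all $c$ samples (its neighbors are all matched), paying $c\cdot n/\sqrt{n}=\widetilde{\Theta}(n)$ matrix probes; over $n/2$ such vertices this is $\widetilde{\Theta}(n^2)$.

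\paragraph{The light-vertex greedy pass cannot absorb the error into $o(n)$.} Let $G$ be the disjoint union of $n/2$ matching edges, so every vertex has degree $1$ and $\mu(G)=n/2$, yet all vertices are light. With $T=\widetilde{\Theta}(\sqrt{n})$, your greedy pass makes $\widetilde{O}(n/T)=\widetilde{O}(\sqrt{n})$ matrix probes per light vertex, each hitting the unique $1$-entry with probability $1/n$; so each light vertex is matched with probability $\widetilde{O}(1/\sqrt{n})$, giving $\E[|M_L|]=\widetilde{O}(\sqrt{n})=o(n)$. The residual ``heavy'' graph is empty, so the final estimate is $o(n)$, while the correct answer is $\Theta(n)$. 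The assertion that ``vertices whose degree is too small for the truncated sampling to ever find a neighbor contribute only $o(n)$ in aggregate'' therefore fails: low-degree vertices can carry an $\Omega(n)$-size matching while being invisible to a $\widetilde{O}(n\sqrt{n})$-budget rejection sampler.

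\paragraph{What the paper does instead.} The paper sidesteps both problems with an auxiliary graph $H$: two copies $V_1,V_2$ of $V(G)$ where, for $v\in V_1$, the $i$-th neighbor is the $i$-th vertex of $V_1$ if $(v,i)\in E(G)$ and of $V_2$ otherwise, plus a pendant bag $U_v$ of $n\log^2 n$ dummy vertices hung on each $v\in V_2$. Every vertex of $V_1\cup V_2$ then has degree exactly $n$ or $n+n\log^2 n$, so the $i$-th adjacency-list entry of any vertex in $H$ is recoverable with a \emph{single} adjacency-matrix probe of $G$ --- there is no rejection sampling and hence no per-query blowup. Running the sparsification first on $V_2$ matches almost all of $V_2$ into the dummy bags, so the list algorithm can then be run on $H$ restricted to $V_1$ with the same guarantees, and the only correction is an explicitly known $O(n/\log n)=o(n)$ additive term for the few $V_2$ vertices that escape into $V_1\cup V_2$. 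Crucially, it is $H$, not $G$, that has large degrees everywhere, so the sparse-graph example above is handled: each $v\in V_1$ has degree $n$ in $H$, and the one true edge in $G$ is found by the RGMM oracle on the unmatched subgraph at the usual cost, not missed.
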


Moreover, our algorithm can be employed as a subroutine for Theorem 2 of \cite{Behnezhad23} to obtain an improved approximation ratio in the dynamic setting.
More precisely, that result requires a subroutine for estimating the maximum matching size in $\wt{O}(n\sqrt{n})$ time,
for which it uses the $0.5$-approximation of~\cite{Behnezhad21}.
Our algorithm can be used instead, resulting in a very slight improvement to the overall approximation guarantee for~\cite{Behnezhad23}.


We note that the framework of~\cite{bhattacharya2023dynamic} can also be used to obtain a similar result. Their algorithm, which performs a single iteration to find a constant fraction of augmenting paths of length three on top of a maximal matching, likewise yields a better-than-2 approximate matching with $n^{2 - \Omega(1)}$ running time. However, the trade-off in this approach is worse in terms of both the approximation ratio and the running time.

\paragraph{Related work.} On the lower bound front, Parnas and Ron \cite{ParnasRon07} demonstrated that any algorithm getting a constant approximation of the maximum matching size needs at least $\Omega(n)$ time. More recently, the work of \cite{behnezhad2023sublinear} established that any algorithm providing a $(2/3 + \Omega(1), \epsilon n)$-multiplicative-additive approximation requires at least $n^{6/5-o(1)}$ time. For sparse graphs, a lower bound of $\Delta^{\Omega(1/\epsilon)}$ was shown for any $\epsilon n$ additive approximation \cite{behnezhad2023local}. For dense graphs, \cite{behnezhad2024approximating} showed a lower bound of $n^{2-O_\epsilon(1)}$ for the runtime of algorithms achieving such additive approximations.

\paragraph{Paper organization.} In \Cref{sec:overview}, we provide an overview of the challenges encountered while designing our algorithm and the techniques used to address them. We first develop an algorithm for bipartite graphs with a multiplicative-additive error in \Cref{sec:bipartite}, avoiding additional challenges that arise from general graphs, trying to obtain multiplicative error (in the adjacency list model), or working with the adjacency matrix. In \Cref{sec:general}, we extend our algorithm to handle general graphs. In \Cref{sec:multiplicative}, we demonstrate how to achieve a multiplicative approximation guarantee. Finally, in \Cref{sec:matrix}, we present a simple reduction showing that our algorithm also works in the adjacency matrix model with a multiplicative-additive error.

\section{Technical Overview}\label{sec:overview}

In this section, we provide an overview of the techniques used in this paper to design our algorithm. We begin with the two-pass semi-streaming algorithm of Konrad and Naidu~\cite{KonradN21} for bipartite graphs. In the first pass, the algorithm constructs a maximal matching $M$. In the second pass, it constructs a maximal $b$-matching between vertices matched in $M$ and those unmatched in $M$. More specifically, each vertex in $V(M)$ has a capacity of $k$, while each vertex in $V \setminus V(M)$ has a capacity of $kb$, where $b = 1 + \sqrt{2}$ and $k$ is a large constant.
The idea is that if $M$ is far from maximum, the $b$-matching will contain many length-3 augmenting paths that can be used to augment $M$.
This algorithm obtains a $(2 - \sqrt{2}) \approx 0.585$-approximation.

Our goal is to develop a sublinear-time algorithm by translating this semi-streaming two-pass algorithm to the sublinear time model.
When trying to do so, several challenges arise.
In this section we describe them step by step, and show how to overcome them.

\paragraph{Challenge (1): constructing a maximal matching in sublinear time is not possible.} In fact, finding all edges of any constant-factor approximation of the maximum matching is impossible in sublinear time due to \cite{ParnasRon07}. Dynamic algorithms for maximum matching \cite{Behnezhad23, BhattacharyaKSW23} use the following approach: they maintain a maximal matching $M$ and then apply the sublinear-time \textbf{random greedy maximal matching (RGMM)} algorithm of Behnezhad~\cite{Behnezhad21} to estimate the size of the maximal $b$-matching. In our setting, we cannot afford to explicitly construct $M$. However, we can obtain oracle access to $M$ using the sublinear-time RGMM algorithm of \cite{Behnezhad21}. More specifically, we can query whether a vertex $v$ is matched in $M$ or not in $\wt{O}(n)$ time. Therefore, a possible solution to address the first challenge is to design two nested oracles: the outer oracle attempts to build a maximal $b$-matching, whereas the inner oracle checks the status of vertices (matched or not in $M$) to correctly filter edges and assign capacities to each vertex.

\paragraph{Challenge (2): two nested oracles require $\Omega(n^2)$ time.} The algorithm of \cite{Behnezhad21} runs in $\wt{O}(\bar{d}(G))$ time, where $\bar{d}(G)$ denotes the average degree of the graph $G$. Additionally, for the outer oracle, it requires $\wt{O}(\bar{d}(G[V(M), V \setminus V(M)]))$ time (i.e., queries to the inner oracle). Unfortunately, it is possible for both $\bar{d}(G)$ and $\bar{d}(G[V(M), V \setminus V(M)])$ to be as large as $\Omega(n)$.  For example, consider a graph with a vertex set $A \cup B$, where $|A| = |B| = n/2$. The edges within $A$ form a complete bipartite graph, while there is an $\epsilon n/2$-regular graph between $A$ and $B$. After running the RGMM algorithm, most edges in the maximal matching belong to $G[A]$, and most vertices in $B$ are unmatched. Consequently, we have $\bar{d}(G[V(M), V \setminus V(M)]) = \Omega(n)$. 

To address this issue, we sparsify the original graph while manually constructing a matching~$M$. In a preprocessing step, starting from an empty $M$, for each unmatched vertex in the graph, we sample $\wt{\Theta}(\sqrt{n})$ neighbors uniformly at random. If an unmatched neighbor exists, we match the two vertices and add this edge to $M$.  Using this preprocessing step, we show that after spending $\wt{O}(n\sqrt{n})$ time, the induced subgraph of vertices that remain unmatched in $M$ has a maximum degree of $\sqrt{n}$ with high probability. Moreover,
since we explicitly materialize $M$,
we are able to check if any vertex is matched in $M$ in $O(1)$ time, eliminating the need for costly oracle calls.
Note that $M$ need not be maximal in $G$; therefore we next extend it to a maximal matching.

Let $M'$ be a maximal matching in $G[V \setminus V(M)]$ obtained by running the sublinear time RGMM algorithm of \cite{Behnezhad21}. Now $M \cup M'$ is a maximal matching for $G$. Inspired by the two-pass semi-streaming algorithm of \cite{KonradN21}, we attempt to augment the maximal matching $M \cup M'$ in two possible ways (see also \cref{fig1}):

\begin{enumerate}
    \item Augment $M'$ using a $b$-matching between $V(M')$ and $V \setminus V(M) \setminus V(M')$. The algorithm then outputs the size of the augmented matching, plus the size of the previously constructed matching $M$.  
    \item Augment $M$ using a $b$-matching between $V(M)$ and $V \setminus V(M)$. The algorithm then outputs the size of the augmented matching.  
\end{enumerate}

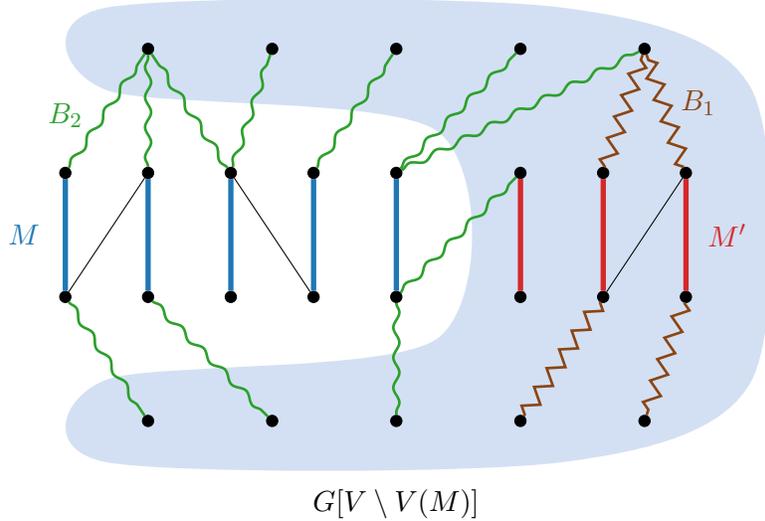
\begin{figure}[t]
    \centering
    \begin{tikzpicture}[scale=1.1, every node/.style={circle, draw, fill=black, inner sep=1.5pt}]

    \node (b1) at (1,0) {};
    \node (c1) at (2,0) {};
    \node (d1) at (3,0) {};
    \node (e1) at (4,0) {};
    \node (f1) at (5,0) {};
    \node (b2) at (1,1.5) {};
    \node (c2) at (2,1.5) {};
    \node (d2) at (3,1.5) {};
    \node (e2) at (4,1.5) {};
    \node (f2) at (5,1.5) {};
    \draw[line width=2pt, myBlue] (b1) -- (b2);
    \draw[line width=2pt, myBlue] (c1) -- (c2);
    \draw[line width=2pt, myBlue] (d1) -- (d2);
    \draw[line width=2pt, myBlue] (e1) -- (e2);
    \draw[line width=2pt, myBlue] (f1) -- (f2);

\node[draw=none, fill=none, color=myBlue] at (0.5, 0.75) {$M$};
\node[draw=none, fill=none, color=myRed] at (9, 0.75) {$M'$};
\node[draw=none, fill=none, color=myBrown] at (8.65, 2.35) {$B_1$};
\node[draw=none, fill=none, color=myGreen] at (1, 2.2) {$B_2$};
\node[draw=none, fill=none] at (5, -2.5) {$G[V \setminus V(M)]$};
    
    \draw (b1) -- (c2);
    \draw (e1) -- (d2);

    \node (up1) at (2,3) {};
    \node (up2) at (3.5,3) {};
    \node (up3) at (5,3) {};
    \node (up4) at (6.5,3) {};
    \node (up5) at (8,3) {};
    \node (do1) at (2,-1.5) {};
    \node (do2) at (3.5,-1.5) {};
    \node (do3) at (5,-1.5) {};
    \node (do4) at (6.5,-1.5) {};
    \node (do5) at (8,-1.5) {};

    \draw[bm] (b2) -- (up1);
    \draw[bm] (c2) -- (up1);
    \draw[bm] (d2) -- (up1);
    \draw[bm] (d2) -- (up2);
    \draw[bm] (e2) -- (up3);
    \draw[bm] (f2) -- (up4);
    \draw[bm] (f2) -- (up5);

    \draw[bm] (b1) -- (do1);
    \draw[bm] (c1) -- (do2);
    \draw[bm] (f1) -- (do3);

    \node (g1) at (6.5,0) {};
    \node (h1) at (7.5,0) {};
    \node (i1) at (8.5,0) {};
    \node (g2) at (6.5,1.5) {};
    \node (h2) at (7.5,1.5) {};
    \node (i2) at (8.5,1.5) {};
    \draw[line width=2pt, myRed] (g1) -- (g2);
    \draw[line width=2pt, myRed] (h1) -- (h2);
    \draw[line width=2pt, myRed] (i1) -- (i2);

    \draw[bm] (f1) -- (g2);

    \draw (h1) -- (i2);

    \draw[bmtwo] (i1) -- (do5);
    \draw[bmtwo] (h1) -- (do4);
    \draw[bmtwo] (h2) -- (up5);
    \draw[bmtwo] (i2) -- (up5);

    \begin{scope}[on background layer]
    \fill[SecondaryLightBlue!55] plot [smooth cycle] coordinates {(9.3,2.5) (7,3.5) (1.5,3.5) (1.5,2.5) (5.5,2) (5.5,-0.5) (1.5,-1) (1.5,-2) (7,-2) (9.3,-1)};
    \end{scope}
\end{tikzpicture}
\vspace{-2em}
\caption{Our algorithm explicitly constructs a matching $M$ (blue), which need not be maximal in~$G$.
We extend it with another matching $M'$ (red), such that $M \cup M'$ is maximal.
The highlighted (light blue) subgraph $G[V \setminus V(M)]$ has degree at most $\sqrt{n}$ with high probability.
In case 1, our algorithm augments $M'$ using a $b$-matching $B_1$ (zigzag edges, brown).
In case 2, our algorithm augments $M$ using a $b$-matching $B_2$ (swirly edges, green).%
}
    \label{fig1}
\end{figure}

The key intuition here (think of the case when $M \cup M'$ yields only a $0.5$-approximation) is that either $M'$ is sufficiently large, making $|M| + (2-\sqrt{2})\cdot 2|M'|$ larger than the approximation guarantee, or $M$ itself is large enough so that $(2-\sqrt{2})\cdot 2|M|$ meets our approximation guarantee (note that $(2-\sqrt{2})$ is the approximation guarantee of \cite{KonradN21}). Augmenting $M$ using a $b$-matching is easier since we have explicit access to $M$ and only need to run a single RGMM oracle to estimate the size of the $b$-matching. Our first estimate, which requires finding a $b$-matching between $V(M')$ and $V \setminus V(M) \setminus V(M')$, is more challenging since we do not have explicit access to $M'$.
To avoid the $\Omega(n^2)$ running time of the two nested oracles, we make crucial use of the aforementioned property that the subgraph of vertices unmatched in $M$ has low induced degree (at most $\sqrt{n}$);
this is the reason why we only try to augment $M'$ rather than $M \cup M'$.
We will discuss this in the next  paragraphs.

\paragraph{Challenge (3): the algorithm does not have access to the adjacency list of $G[V \setminus V(M)]$.} After the sparsification step, the average degree $d$ of $G[V \setminus V(M)]$ is at most $\sqrt{n}$. Hence, if the algorithm had access to the adjacency list of $G[V \setminus V(M)]$, it could run the nested oracles in $\wt{O}(d^2) = \wt{O}(n)$ time by executing two RGMM algorithms: inner oracle for computing $M'$ and outer oracle for the $b$-matching to augment $M'$. But, since the nested oracles may visit up to $n$ vertices, and retrieving the full adjacency list of a vertex in $G[V \setminus V(M)]$ requires $\Omega(n)$ time, it seems that the overall running time of the algorithm could still be as high as $\Omega(n^2)$.

Here, we leverage two key properties of the RGMM algorithm to refine the runtime analysis. The first property is that at each step, the algorithm requires only a random neighbor of the currently visited vertex. Intuitively, if a vertex has degree $\Theta(n)$ in $G$, in expectation it takes $O(n/d)$ samples from the adjacency list of the original graph to encounter a vertex from $G[V \setminus V(M)]$. Thus, if all vertices in $G[V \setminus V(M)]$ had degree $d$, one could easily argue that the running time of the algorithm is $\wt{O}(d^2 \cdot n/d) = \wt{O}(n\sqrt{n})$. However, vertex degrees can vary, and for a vertex with a constant degree, we would need $\Omega(n)$ samples from the adjacency list of $G$ to find a single neighbor in $G[V \setminus V(M)]$. To address this challenge, we utilize another property of the RGMM algorithm, recently proven by \cite{steiner-tree-itcs}. Informally, this result shows that
during oracle calls for RGMM,
a vertex is visited proportionally to its degree, implying that low-degree vertices are visited only a small number of times.

\paragraph{Challenge (4): outer oracle creates biased inner oracle queries.} The final main challenge we discuss here is that the simple $\wt{O}(n\sqrt{n})$ bound, which we informally proved in the previous paragraph, relies on the tacit assumption that the inner oracle queries generated by the outer oracle correspond to $\wt{O}(\sqrt{n})$ uniformly random calls to the inner oracle. Indeed, the running time of the algorithm of \cite{Behnezhad21} is analyzed for a uniformly random query vertex; however, there may exist a vertex $v$ in the graph for which calling the inner oracle takes significantly more than $\wt{O}(d)$ time. Consequently, if all outer oracle calls end up querying $v$, the running time could be significantly worse than $\wt{O}(n\sqrt{n})$.  To overcome this issue, we use the result of \cite{steiner-tree-itcs} along with the fact that the maximum degree of $G[V \setminus V(M)]$ is $\wt{O}(\sqrt{n})$. We show that for any vertex $v$, the outer oracle queries the inner oracle for $v$ at most $\wt{O}(\deg_{G[V \setminus V(M)]}(v) / \sqrt{n})$ times in expectation. This enables us to formally prove that the total running time of the algorithm remains at most $\wt{O}(n\sqrt{n})$.

\paragraph{General graphs and the adjacency matrix model.} There are additional challenges when dealing with general graphs as opposed to bipartite graphs, such as the fact that the sizes of the maximal matching and the $b$-matching alone are insufficient to achieve a good approximation ratio. For general graphs, our algorithm estimates the maximum matching in the union of the maximal matching and the $b$-matching, which requires using the $(1-\epsilon)$-approximate local computation algorithm (LCA) by \cite{LeviRY17} on the subgraph formed by this union, to which we only have oracle access. We encourage readers to refer to \Cref{sec:general} for more details about the techniques used there.

Additionally, for more information on the extension of the algorithm that operates in the adjacency matrix model, we recommend readers to check \Cref{sec:matrix}.

\section{Preliminaries}\label{sec:preliminaries}
Given a graph $G$, we use $V(G)$ to denote its set of vertices and use $E(G)$ to refer to its set of edges.
For a vertex $v\in V(G)$, we use $\deg_G(v)$ to denote the degree of the vertex, i.e., the number of edges with one endpoint equal to $v$. We use $\Delta(G)$ to denote the maximum degree over all vertices in the graph, and $\bar{d}(G)$ to denote the average degree of the graph. Further, we use $\mu(G)$ to denote the size of the maximum matching in $G$.

Given a graph $G=(V,E)$, and a subset of vertices $A\subseteq V$, $G[A]$ is defined to be the induced subgraph consisting of all edges with both endpoints in $A$.
Further, given disjoint subsets $A,B\subset V$ of vertices, $G[A,B]$ is defined to be the bipartite subgraph of $G$ consisting of all edges between $A$ and $B$.

Given a matching $M$ in $G$, an {\em augmenting path} is a simple path starting and ending at different vertices, such that the first and the last vertices are unmatched in $M$, and the edges of the path alternate between not belonging to $M$ and belonging to $M$.

Given a vector $b$ of integer capacities of dimension $|V(G)|$, a {\em $b$-matching} in $G$ is a {\em multi-set} $F$ of edges in $G$ such that each vertex $v\in V$ appears no more than $b_v$ times as an endpoint of an edge in $F$. 

Given a graph $G$ and a permutation $\pi$ over its edges, $\GMM(G, \pi)$ is used to refer to the unique matching $M$ obtained by the following process. We let $M$ be initialized as empty, and consider the edges of $G$ one by one according to the permutation $\pi$. We add an edge to the matching $M$ if none of its endpoints are already matched in $M$. A random greedy maximal matching, i.e., $\RGMM(G)$ is the 
matching obtained by picking a permutation $\pi$ uniformly at random and outputting $\GMM(G,\pi)$.

\begin{proposition}[\cite{Behnezhad21}]\label{prop:rgmm}
    There exists an algorithm that,
    given adjacency list access
    to a graph~$G$ of average degree $d$,
    for a random vertex $v$ and a random permutation $\pi$,
    determines if $v$ is matched in $\GMM(G, \pi)$ in $\widetilde{O}(d)$ expected time. 
\end{proposition}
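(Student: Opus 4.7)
The plan is to follow Behnezhad's approach, which replaces the random permutation on edges by an equivalent continuous rank model and then answers the query via a truncated recursive exploration. Concretely, assign each edge $e$ an independent rank $r_e \sim \unifzeroone$; it is standard that the greedy matching built by processing edges in increasing rank order has the same distribution as $\GMM(G, \pi)$ for uniform $\pi$. In this reformulation, an edge $e = (u, v)$ lies in $\GMM$ if and only if for every edge $e'$ adjacent to $e$ with $r_{e'} < r_e$, $e' \notin \GMM$. This recursive characterization is the basis of the oracle.

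The oracle for the query \emph{``is $v$ matched?''} proceeds as follows. First, use the adjacency list of $G$ to enumerate the neighbors of $v$, revealing edge ranks lazily as needed. Sort the incident edges in increasing order of rank. Then, for each incident edge $(v, u)$ in this order, recursively call the oracle to determine whether $u$ is already matched by some incident edge $(u, w)$ with $r_{(u,w)} < r_{(v,u)}$; this is the same subroutine but with a rank cutoff passed down. If the recursion returns that $u$ is not blocked before rank $r_{(v,u)}$, then $(v, u) \in \GMM$ and we return \true; otherwise we try the next edge of $v$. If all incident edges are exhausted, return \false.

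To control the cost, I would truncate the recursion at a global rank threshold $\tau = \Theta(\log n / d)$: edges with rank above $\tau$ are ignored. A straightforward probabilistic argument shows that, with high probability over the ranks, truncation does not change the answer for any fixed vertex, because for a vertex to remain unresolved above rank $\tau$ requires none of its $\deg(v)$ incident edges to fire, an event of probability at most $(1-\tau)^{\deg(v)}$, which after averaging and a union bound over the recursion is $1/\poly(n)$ at the chosen $\tau$.

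The main obstacle, and the bulk of the proof, is the running-time analysis. The idea is to bound the total work summed over all starting vertices by $\Ot(nd)$, so that the expected work for a uniformly random $v$ is $\Ot(d)$. For a vertex $u$ explored as an intermediate node at cutoff $r \le \tau$, the expected number of incident edges examined is at most $r \cdot \deg(u)$, and each such edge either terminates the branch (the recursed endpoint is matched below $r$) or spawns a further subproblem with a strictly smaller cutoff. A charging/potential argument, together with linearity of expectation and the fact that $\sum_u \deg(u) = 2|E| = nd$, then yields an $\Ot(nd)$ bound on the aggregate work. Dividing by $n$ and accounting for the polylogarithmic factors from the high-probability truncation and the lazy adjacency-list probes gives $\Ot(d)$ expected time for a random query, as required.
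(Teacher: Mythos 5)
This proposition is cited from Behnezhad's paper rather than proved here, so I am comparing your sketch against what that algorithm actually does. The overall plan you describe --- replace the random permutation by i.i.d.\ edge ranks, answer a query by recursing on adjacent lower-rank edges, and bound expected work for a uniformly random starting vertex --- is indeed the right skeleton. But there is a concrete gap in the algorithm you describe, and your own running-time claim is inconsistent with it.

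You write that at each visited vertex you ``enumerate the neighbors of $v$\dots\ Sort the incident edges in increasing order of rank.'' Enumerating and sorting $v$'s adjacency list costs $\Theta(\deg_G(v)\log\deg_G(v))$, and intermediate vertices reached by the recursion are \emph{not} uniformly random, so this can be $\Theta(n\log n)$ per visited vertex even when the average degree is small. Your later charging step then asserts that the expected number of incident edges examined at an intermediate vertex $u$ under cutoff $r$ is only $r\cdot\deg(u)$; that is the number of edges that fall \emph{below} the cutoff, not the number your algorithm reads --- an algorithm that sorts the whole list reads all $\deg(u)$ of them. Behnezhad's algorithm specifically avoids this by never enumerating adjacency lists: it repeatedly \emph{samples a uniformly random neighbor}, using the fact that with i.i.d.\ uniform ranks a fresh random neighbor gives the next order statistic of $v$'s incident-edge ranks at $O(1)$ expected cost. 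This ``only random-neighbor probes are needed'' property is exactly what the present paper later relies on (it is stated explicitly in the implementation-details paragraph, and is what makes \Cref{prop:outdegree-bound-1,prop:outdegree-bound} from \cite{steiner-tree-itcs} applicable). Without it, you do not get $\widetilde O(d)$.

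A secondary issue: your truncation argument at $\tau=\Theta(\log n/d)$ says that an unresolved vertex must have all $\deg(v)$ incident edges above rank $\tau$, an event of probability $(1-\tau)^{\deg(v)}$; but this is only $1/\poly(n)$ when $\deg(v)=\Omega(d)$. Vertices of degree far below average are not handled by a single global cutoff, and the actual analysis does not hinge on such a threshold --- it bounds the expected number of recursive calls directly and then separately accounts for the sampling cost of locating the next edge in rank order.
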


Given the problem of maximizing a function $f:D\rightarrow \mathbb{R}$ defined over a domain $D$, with optimal value $f^*$, an $(\alpha,\beta)$-multiplicative-additive approximation of $f^*$ is a solution $s\in D$ such that $(f^*/\alpha) -\beta \leq f(s)\leq f^*$.

\section{Algorithm for Bipartite Graphs}\label{sec:bipartite}

We begin by describing our algorithm for bipartite graphs. We focus on implementing an algorithm with a multiplicative-additive approximation guarantee. Also, we assume that we have access to the adjacency list of the graph. These assumptions will help us avoid certain complications and challenges that arise when working with general graphs, the adjacency matrix model, or when trying to obtain a multiplicative approximation guarantee. To lift these assumptions, we can leverage strong tools and methods from the literature, which, with slight modifications, can be applied here. This section contains the main novelties of our approach and proofs. Our algorithm for bipartite graphs can be seen as a translation and implementation of a two-pass streaming algorithm, which we discuss in the next subsection.

\subsection{Two-Pass Streaming Algorithm for Bipartite Graphs}

Our starting point is the two-pass streaming algorithm which is described in \Cref{alg:two-pass}. This algorithm, or its variations, has appeared in previous works on designing streaming or dynamic algorithms for maximum matching \cite{KonradN21, BhattacharyaKSW23, Behnezhad23}. In words, the first pass of the algorithm only finds a maximal matching $M$. In the second pass, the algorithm finds a maximal $b$-matching $B$ in $G[V(M), V \setminus V(M)]$, where $V(M)$ is the set of vertices matched  by $M$. The capacities of vertices in $V(M)$ and in $V \setminus V(M)$ for the $b$-matching are $k$ and $kb$, respectively. Moreover, in the second pass of the algorithm, when an edge $(u,v)$ arrives in the stream, we add multiple copies of the edge to the subgraph $B$, as long as doing so does not violate the capacity constraints.

\begin{algorithm}
\caption{Two-pass Streaming Algorithm for Bipartite Graphs}
\label{alg:two-pass}

\textbf{Parameter:} let $b = 1 + \sqrt{2}$ and $k$ be an integer larger than $\frac{1}{b\epsilon^3}$.

\textbf{First Pass:} $M \gets$ maximal matching of $G$.  \algorithmiccomment{Finding maximal matching}

\textbf{Second Pass:} \algorithmiccomment{Finding $b$-matching}

Let $B = \emptyset$.
    
\For{$(u, v) \in G[V(M), \overline{V(M)}]$ where $u \in V(M)$}{
    \While{$\deg_B(u) < k$ and $\deg_B(v) < \ceil{kb}$}{
        $B \gets B \cup {(u, v)}$.  \algorithmiccomment{We allow multi edges}
    }

}

\Return $(1-1/b) \cdot |M| + 1/(kb) \cdot |B|$.

\end{algorithm}

Intuitively, the algorithm tries to find length-3 augmenting paths using the $b$-matching that it finds in the second pass. The following lemma shows the approximation guarantee of \Cref{alg:two-pass}.

\begin{lemma}[Lemma 3.3  in~\cite{BhattacharyaKSW23}] \label{lem:585bha}
    For any $\epsilon \in (0, 1)$, the output of \Cref{alg:two-pass} is a $(2 -\sqrt{2} - \epsilon)$-approximation for maximum matching of $G$.
\end{lemma}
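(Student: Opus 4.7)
The plan is to lower-bound the algorithm's output
$(1-1/b)|M| + (1/(kb))|B|$
by $(2-\sqrt{2}-\epsilon)\mu(G)$ via a structural exchange argument between $M$ and an optimal matching $M^*$, combined with the maximality of the $b$-matching $B$. The intuition is that either $M$ is already close to optimal (and the first term suffices), or $M$ leaves many augmenting opportunities, which forces $|B|$ to be large by capacity saturation.

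I would set up notation as follows: let $\mu^* = \mu(G)$, let $U = V \setminus V(M)$ be the vertices unmatched in $M$, and let $U^* = U \cap V(M^*)$ be those unmatched vertices incident to an edge of $M^*$. The first step is to show $|U^*| \ge 2(\mu^* - |M|)$ by simply counting: every vertex of $V(M^*)$ not in $V(M)$ lies in $U^*$, so $|U^*| \ge |V(M^*)| - |V(M)| = 2\mu^* - 2|M|$. Geometrically, $U^*$ collects the endpoints of augmenting paths and even-length alternating paths in $M \triangle M^*$.

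The heart of the proof is the structural claim $|B|/k \ge \sqrt{2}\,(\mu^* - |M|)$. Define $L = \{v \in V(M) : \deg_B(v) = k\}$ (saturated matched vertices) and $T = \{u \in U : \deg_B(u) = \lceil kb \rceil\}$ (saturated unmatched vertices), and let $t = |T \cap U^*|$. Maximality of $B$ implies that for every $u \in U^* \setminus T$, every neighbor of $u$ in $V(M)$ lies in $L$ — otherwise another copy of the corresponding edge could have been added to $B$. In particular, the $M^*$-partner of $u$ (the $V(M)$-endpoint of the $M^*$-edge incident to $u$) lies in $L$; since $M^*$ is a matching, these partners are distinct across the different $u \in U^* \setminus T$, so $|L| \ge |U^*| - t$. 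Counting edges of $B$ from each side then yields $|B| \ge k|L| \ge k(|U^*|-t)$ and $|B| \ge \lceil kb \rceil \cdot t \ge kbt$, hence $|B|/k \ge \max(|U^*|-t,\ bt) \ge b|U^*|/(b+1) \ge \sqrt{2}\,(\mu^*-|M|)$, using $b/(b+1) = \sqrt{2}/2$ and the bound on $|U^*|$.

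Substituting into the output and using the identities $1-1/b = 2-\sqrt{2}$ and $\sqrt{2}/b = 2-\sqrt{2}$ gives $(1-1/b)|M| + (1/(kb))|B| \ge (2-\sqrt{2})|M| + (2-\sqrt{2})(\mu^*-|M|) = (2-\sqrt{2})\mu^*$. The $-\epsilon$ slack in the stated approximation factor absorbs discretization loss: the capacity $\lceil kb \rceil$ is slightly larger than $kb$, and cleaning up the corresponding rounding costs $O(1/k)$ per edge of $B$, which is negligible for $k > 1/(b\epsilon^3)$. The main obstacle I expect is pinning down the structural identification that yields $|L| \ge |U^*|-t$: one has to argue carefully, via maximality of $B$, that it is specifically the $M^*$-partner of each unsaturated $u \in U^*$ that lands in $L$, so that distinctness follows from $M^*$ being a matching; once this is in place, the remaining max-min optimization over $t$ is routine.
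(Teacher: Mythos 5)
The paper has no proof of this lemma of its own---it is quoted directly as Lemma 3.3 of \cite{BhattacharyaKSW23}, and \Cref{lem:585} later just re-cites it---so there is no in-paper argument to compare yours against. Your self-contained proof is correct. The structural step is exactly right: for each $u \in U^* \setminus T$, its $M^*$-partner $w$ necessarily lies in $V(M)$ (this is the one place where maximality of $M$ is used---an $M^*$-edge cannot have both endpoints unmatched by a maximal $M$); when the edge $(u,w)$ is processed in the second pass, the while-loop terminates only once one endpoint is saturated, and since $u \notin T$ that endpoint must be $w$, placing $w \in L$; distinctness of these $w$'s follows because $M^*$ is a matching, hence $|L| \ge |U^*| - t$. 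The two side-counts $|B| \ge k|L|$ and $|B| \ge \lceil kb\rceil\, t \ge kbt$ are legitimate because $B$ lives in the bipartite graph between $V(M)$ and $U$, and the max-min balance at $t = |U^*|/(b+1)$ with $b = 1+\sqrt{2}$ gives $b/(b+1) = \sqrt{2}/2$, hence $|B|/k \ge \sqrt{2}(\mu^*-|M|)$; the identities $1 - 1/b = \sqrt{2}/b = 2 - \sqrt{2}$ then close the computation. Two small notes worth making explicit in a write-up: (i) state that maximality of $M$ is what guarantees $M^*$-partners of $U^*$-vertices sit in $V(M)$, since this is the lynchpin that makes the map into $L$ well-defined; and (ii) observe that your lower bound actually yields a clean $(2-\sqrt{2})\mu^*$ with no loss---the $\epsilon$ in the statement is really paying for the capacity ceiling $\lceil kb\rceil$, which bites on the \emph{upper}-bound side (the fractional-matching certificate that the output does not overshoot $\mu^*$ uses weight $1/\lceil kb\rceil$ per $B$-edge, slightly smaller than the $1/(kb)$ the algorithm reports), not on the lower bound you establish.
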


\subsection{Sublinear Time Implementation of \Cref{alg:two-pass}}

In this section, we demonstrate how to implement a modification of \Cref{alg:two-pass} in the sublinear time model,  as outlined in \Cref{sec:overview}.

\paragraph{Sparsification:} In order to be able to use two levels of recursive oracle calls, we need to sparsify the graph. We first sample $\widetilde{O}(n \sqrt{n})$ edges and construct a maximal matching on the sampled edges to sparsify the induced subgraph of unmatched vertices. This sparsification step is formalized in \Cref{alg:sparsification}. In \Cref{lem:sparsification-maximal}, we show that if we sample enough edges, then vertices that remain unmatched after this phase have an induced degree of $\sqrt{n}$. This step is very similar to the algorithm of~\cite[Appendix A]{chen2020sublinear} for approximating a maximal matching.

\begin{algorithm}
\caption{Sparsification of the Induced Subgraph of Unmatched Vertices}
\label{alg:sparsification}
\textbf{Parameter:} let $c = 2 \sqrt{n} \cdot \log n$ be the sparsification parameter that controls the number of edges that the algorithm samples.

$M \gets \emptyset$

\For{$v \in V$}{
    \If{$v \notin V(M)$}{
        Sample $c$ vertices $u_1, \ldots, u_c$ from $N(v)$.
        
        \For{$i \gets 1 \ldots c$}{
            \If{$u_i \notin V(M)$}{
                $M \gets M \cup \{(v, u_i)\}$

                \textbf{break;} 
                
            }
        }
    }

}

\Return $M$

\end{algorithm}

\begin{claim}\label{clm:sparsification-time}
    \Cref{alg:sparsification} runs in $\widetilde{O}(n \sqrt{n})$ time.
\end{claim}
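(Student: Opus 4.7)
The plan is to bound the total work of \Cref{alg:sparsification} by accounting for (i) the outer loop over $V$, (ii) the neighbor sampling inside each iteration, and (iii) the per-operation cost of membership tests for $V(M)$. Each of these contributes at most a $\widetilde{O}(1)$ factor beyond the obvious $n \cdot c$ term.

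First, I would observe that the outer loop runs exactly $n$ times. For each $v \in V$, the algorithm either skips the iteration (if $v \in V(M)$, contributing $O(1)$) or samples $c = 2\sqrt{n}\log n$ random neighbors $u_1,\ldots,u_c$ from $N(v)$ and performs, for each sample, one check of the form ``$u_i \in V(M)$.'' Crucially, once an unmatched $u_i$ is found the inner loop terminates, so each iteration performs at most $c$ samples and $c$ membership tests.

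Next, I would argue that each of the three primitive operations, namely (a) sampling a uniformly random neighbor of $v$, (b) testing $u_i \in V(M)$, and (c) updating $M$ and marking $v$ and $u_i$ as matched, can be implemented in $\widetilde{O}(1)$ time. For (a), in the adjacency list model we can obtain $\deg_G(v)$ in $O(\log n)$ by binary search over the adjacency array for the first NULL entry, after which a random index in $\{1,\ldots,\deg_G(v)\}$ yields a uniformly random neighbor via one oracle call; alternatively, we can assume the standard extension that exposes $\deg_G(v)$ in $O(1)$. For (b) and (c), we maintain an auxiliary Boolean array \texttt{matched}$[\cdot]$ of length $n$, initialized to \false{}, and flip the two relevant entries whenever an edge is added to $M$; this gives $O(1)$ queries and updates.

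Putting these together, the total running time is at most $n \cdot c \cdot \widetilde{O}(1) = n \cdot 2\sqrt{n}\log n \cdot \widetilde{O}(1) = \widetilde{O}(n\sqrt{n})$, which is exactly the claimed bound. The only even mildly subtle point is justifying the $\widetilde{O}(1)$ cost of uniform neighbor sampling in the adjacency list model, but this is standard and is implicitly assumed throughout this line of work (see, e.g., the RGMM oracle of \cite{Behnezhad21} used later in the paper). I do not expect any real obstacle; this claim is a straightforward bookkeeping statement whose purpose is simply to certify that the preprocessing step fits within the target time budget, while the substantive content of the sparsification analysis lies in the forthcoming \Cref{lem:sparsification-maximal} about the induced degree of unmatched vertices.
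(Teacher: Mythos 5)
Your proof is correct and takes essentially the same approach as the paper: bound the work by $n$ outer iterations, each performing at most $c = \widetilde{O}(\sqrt{n})$ neighbor samples with $\widetilde{O}(1)$ cost per sample (including the degree lookup and the $O(1)$ membership test against a maintained ``matched'' array). The paper's proof is a one-line version of this bookkeeping; your elaboration of the primitive-operation costs is fine but not substantively different.
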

\begin{proof}
    For each vertex $v$ in the graph, the algorithm samples $\widetilde{O}(\sqrt{n})$ vertices from the adjacency list of the vertex $v$ if it is not matched by the time the algorithm processes the vertex in Line 3. Thus, the total running time is at most $\widetilde{O}(n\sqrt{n})$.
\end{proof}

\begin{lemma}\label{lem:sparsification-maximal}
    With high probability, we have $\Delta(G[V \setminus V(M)]) \leq \sqrt{n}$.
\end{lemma}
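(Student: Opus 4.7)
The plan is to fix a vertex $v \in V$ and bound the probability of the bad event $E_v$ that, when the algorithm terminates, $v$ is unmatched and has more than $\sqrt{n}$ unmatched neighbors. By a union bound over the $n$ vertices, it suffices to show $\Pr[E_v] \le 1/n^2$ for each $v$.

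The first ingredient is a simple monotonicity observation: the matching $M$ produced by \Cref{alg:sparsification} only grows over time, so any vertex $u$ that is unmatched at the end was also unmatched at every earlier point in the execution. Let $W_v$ denote the set of neighbors of $v$ that are unmatched at the moment the outer loop first processes $v$. Then the set of neighbors of $v$ that are unmatched at the end of the algorithm is a subset of $W_v$, so $E_v$ forces both $|W_v| > \sqrt{n}$ and the fact that $v$ itself was unmatched when it was processed (otherwise the inner sampling loop is skipped entirely, but then $v$ would already be in $V(M)$, contradicting $E_v$).

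The second ingredient is an elementary concentration argument conditioned on the state of the algorithm just before $v$ is processed. Given $|W_v| > \sqrt{n}$, each of the $c = 2\sqrt{n}\log n$ independent uniform samples from $N(v)$ lands in $W_v$ with probability $|W_v|/\deg_G(v) \ge \sqrt{n}/n = 1/\sqrt{n}$. If any of these $c$ samples hits $W_v$, the algorithm immediately matches $v$ and $E_v$ cannot occur. Hence the conditional probability that all $c$ samples miss $W_v$ is at most $(1 - 1/\sqrt{n})^c \le \exp(-c/\sqrt{n}) = \exp(-2\log n) = 1/n^2$, and since the other half of $E_v$ is vacuous when $|W_v| \le \sqrt{n}$, we obtain $\Pr[E_v] \le 1/n^2$. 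Union bounding over $v$ gives $\Pr[\bigcup_v E_v] \le 1/n$, which is the desired high-probability bound.

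The only delicate point I would be careful with is that $W_v$ is itself a random set depending on the randomness used in earlier iterations of the outer loop. This causes no trouble, however, because the $c$ fresh samples taken when $v$ is processed are independent of that history, so the conditional bound $(1-1/\sqrt{n})^c$ holds uniformly over all realizations of the history in which $|W_v| > \sqrt{n}$. Beyond this bookkeeping, the argument is standard, and the main obstacle is merely to state it cleanly for an arbitrary processing order of the outer loop.
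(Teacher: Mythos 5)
Your proposal is correct and follows essentially the same route as the paper's proof: fix a vertex $v$, condition on the state of the algorithm at the moment $v$ is processed, argue that if $v$ then has more than $\sqrt{n}$ unmatched neighbors the $c = 2\sqrt{n}\log n$ fresh samples will hit one with probability at least $1 - (1-1/\sqrt{n})^c \ge 1 - 1/n^2$, and union bound over the $n$ vertices. The only difference is that you make the monotonicity of $M$ and the conditioning on the history before $v$ is processed explicit, which the paper leaves implicit but which is exactly the right bookkeeping.
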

\begin{proof}
    We will show that for every $v \in V$, the probability that $v \in V \setminus V(M)$ and the degree of $v$ in $G[V \setminus V(M)]$ is larger than $\sqrt{n}$
    is at most $1/n^2$. The lemma then follows by a union bound over all $v \in V$.

    Consider the moment before $v$ is processed. Assume that at this time, still $v \in V \setminus V(M)$ and the degree of $v$ in $G[V \setminus V(M)]$ is larger than $\sqrt{n}$. Then, each of the $c$ samples has a probability of at least $\sqrt{n}/n$ to be one of the unmatched neighbors, in which case $v$ will become matched. Thus the probability that $v$ remains unmatched after it is processed is at most
    \begin{align*}
        \left(1 - \frac{1}{\sqrt{n}}\right)^c = \left(1 - \frac{1}{\sqrt{n}}\right)^{2\sqrt{n} \log n} \leq \left( \frac{1}{e}\right)^{2\log n} = \frac{1}{n^2}.
    \end{align*}
\end{proof}

\paragraph{Augmenting $M$ using nested oracles:} Now we are ready to present our sublinear algorithm. After sparsifying the graph by finding a partially maximal matching $M$, we try to augment $M$ in two different ways that
we have outlined in \Cref{sec:overview}
and which
are formalized in \Cref{alg:sublinear-first}.
See also \cref{fig1}.

For simplicity, we pretend that $kb \in \mathbb{Z}$ throughout the paper. Since $k$ is an arbitrarily large constant, using $kb$ instead of $\lceil kb \rceil$ leads to an arbitrarily small error in the calculations.
We also note that a maximal $b$-matching can be viewed as maximal matching if we duplicate vertices multiple times.

First, we try to augment the matching by designing a maximal matching oracle for $G[V \setminus V(M)]$ vertices and then another oracle for finding a $b$-matching between the vertices newly matched using the new oracle and unmatched vertices. Let $M'$ be the maximal matching of $G[V \setminus V(M)]$ that can be obtained by the oracle. We try to augment it with a $b$-matching $B_1$.

However, it is also possible that $|M'|$ is small compared to $|M|$, which implies that in the previous case, the $b$-matching does not help to find many augmenting paths, as the size of the maximal matching that we try to augment is too small. To account for this case, the algorithm also finds a $b$-matching $B_2$ between $V(M)$ and $V \setminus V(M)$. 

Note that because the algorithm finds the initial matching $M$ explicitly, checking whether a vertex belongs to $V(M)$ or not can be done in $O(1)$ time.

\begin{algorithm}[H]
\caption{Sublinear Time Algorithm for Bipartite Graphs with Access to the Adjacency List (see \cref{fig1})}
\label{alg:sublinear-first}

Run \Cref{alg:sparsification} with $c = 2\sqrt{n} \log n$ and let $M$ be its output.

Let $\mu_{M'}$ and $\mu_{B_1}$ be the estimate of the size of a random greedy maximal matching $M'$ in $G[V \setminus V(M)]$ and the estimate of the size of a random greedy maximal $b$-matching $B_1$ in $G[V(M'), V \setminus V(M) \setminus V(M')]$ by running \Cref{alg:first-case}. \Comment{Case 1}

Let $\mu_1 := |M| + (1 - \frac{1}{b}) \mu_{M'} + \frac{1}{kb} \mu_{B_1}$. \Comment{Case 1}

Let $\mu_{B_2}$ be the estimate of the size of a random greedy maximal $b$-matching $B_2$ in $G[V(M), V \setminus V(M)]$ by running \Cref{alg:second-case}. \Comment{Case 2}

Let $\mu_2 := (1 - \frac{1}{b}) |M| + \frac{1}{kb} \mu_{B_2}$. \Comment{Case 2}

\Return $\max (\mu_1, \mu_2)$.

\end{algorithm}

\begin{algorithm}[H]
\caption{Algorithm for the First Case}
\label{alg:first-case}
Let $b = 1 + \sqrt{2}$ and $k$ be an integer larger than $\frac{1}{b\epsilon^3}$.

Let $\pi$ be a random permutation over edges of $G[V \setminus V(M)]$ and let $M'$ be its corresponding random greedy maximal matching.

Let $G_1 := G[A,B]$ where $A = V(M')$ and $B = V \setminus V(M) \setminus V(M')$. 

Let $G'_1$ be a bipartite graph obtained from $G_1$ by adding $k$ copies of vertices in $A$ and $kb$ copies of vertices in $B$. Further, if there exists an edge between $u \in A$ and $v \in B$ in $G_1$, we add edges between all copies of $u$ and $v$ in $G'_1$.

$r \gets 6 \log^3 n$.

Run the algorithm of \Cref{prop:rgmm} for $r$ random vertices and fixed permutation $\pi$ in $G[V \setminus V(M)]$ and let $X_i$ be the indicator if the $i$-th vertex is matched.

Let $X \gets \sum_{i=1}^r X_i$ and $\mu_{M'} \gets \frac{nX}{2r} - \frac{n}{2\log n}$.

Run nested oracles of \Cref{prop:rgmm} for $r$ random vertices and fixed permutation $\pi$ in $G'_1$ and let $Y_i$ be the indicator if the $i$-th vertex is matched.

Let $Y \gets \sum_{i=1}^r Y_i$ and $\mu_{B_1} \gets \frac{nY}{2r} - \frac{n}{2\log n}$.

\Return $\mu_{M'}$ and $\mu_{B_1}$.

\end{algorithm}

\begin{algorithm}[H]
\caption{Algorithm for the Second Case}
\label{alg:second-case}
Let $b = 1 + \sqrt{2}$ and $k$ be an integer larger than $\frac{1}{b\epsilon^3}$.

Let $G_2 := G[A,B]$ where $A = V(M)$ and $B = V \setminus V(M)$.

Let $G'_2$ be a bipartite graph obtained from $G_2$ by adding $k$ copies of vertices in $A$ and $kb$ copies of vertices in $B$. Further, if there exists an edge between $u \in A$ and $v \in B$ in $G_2$, we add edges between all copies of $u$ and $v$ in $G'_2$.

$r \gets 6 \log^3 n$.

Run the algorithm of \Cref{prop:rgmm} for $r$ random vertices and permutations in $G'_2$ and let $Z_i$ be the indicator that shows if the $i$-th vertex is matched.

Let $Z \gets \sum_{i=1}^r Z_i$ and $\mu_{B_2} \gets \frac{nZ}{2r} - \frac{n}{2\log n}$.

\Return $\mu_{B_2}$.
    
\end{algorithm}

\paragraph{Implementation details of the algorithm:} There are some technical details in the implementation of the algorithm that are not included in the pseudocode:

\begin{itemize}
    \item \textbf{Access to the adjacency list of an induced subgraph:} Both in \Cref{alg:first-case} and \Cref{alg:second-case}, we run the algorithm of \Cref{prop:rgmm} for some induced subgraph of $G$ (for example, line 5 of \Cref{alg:second-case}). However, \Cref{prop:rgmm} works with access to the adjacency list of the input graph. To address this issue, we leverage an important property of the algorithm in \Cref{prop:rgmm}, namely that it only needs to find a random neighbor of a given vertex at each step of its execution. Now, whenever the algorithm requires a random neighbor of vertex $v$ in a subgraph $H$, it queries random neighbors in the original graph $G$ until it finds one that belongs to $H$. This increases the running time of the algorithm, as it may take $\omega(1)$ time to locate a valid neighbor in $H$, which we will formally bound in our runtime analysis.

    \item \textbf{Nested oracles in line 8 of \Cref{alg:first-case}:} Unlike $M$, we do not explicitly construct the maximal matching $M'$ in \Cref{alg:first-case}. Moreover, the edges of the subgraph $G_1'$ connect vertices matched by $M'$ with those that remain unmatched in either $M$ or $M'$. Hence, to verify whether an edge belongs to $G_1'$, we need to determine whether its endpoints are matched or unmatched in $M'$ by accessing the algorithm of \Cref{prop:rgmm}. This again increases the algorithm's runtime, which we will also formally bound in our runtime analysis.
\end{itemize}

\subsection{Analysis of the Approximation Ratio}

The following lemma,
an analogue of Observation~3.1 in~\cite{BhattacharyaKSW23},
substantiates the soundness of the estimates $\mu_1$ and $\mu_2$ produced in \Cref{alg:sublinear-first}.

\begin{lemma} \label{lem:fractional_matching}
    Let $M$, $M'$, $B_1$ and $B_2$ be as in the description of \Cref{alg:sublinear-first}. Then
    \begin{itemize}
        \item $\mu(G) \ge \mu(M \cup M' \cup B_1) \ge |M| + (1 - \frac{1}{b}) |M'| + \frac{1}{kb} |B_1|$,
        \item $\mu(G) \ge \mu(M \cup B_2) \ge (1 - \frac{1}{b}) |M| + \frac{1}{kb} |B_2|$.
    \end{itemize}
\end{lemma}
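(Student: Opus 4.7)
The plan is to construct explicit fractional matchings witnessing the two lower bounds, and then invoke the standard fact that for bipartite graphs the maximum fractional matching equals the maximum (integral) matching. The upper bounds $\mu(G) \ge \mu(M \cup M' \cup B_1)$ and $\mu(G) \ge \mu(M \cup B_2)$ are immediate from edge-subgraph containment, so all the work is in the lower bounds.

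For the first inequality, I assign weights $x_e$ on the edges of $M \cup M' \cup B_1$ (viewed as a bipartite graph whose edge set is the support of the multiset $B_1$ together with $M \cup M'$): weight $1$ on each edge of $M$, weight $1 - 1/b$ on each edge of $M'$, and weight $(\text{multiplicity in }B_1)/(kb)$ on each edge in the support of $B_1$. The three edge families interact cleanly with the vertex partition: $B_1 \subseteq G[V(M'),\, V \setminus V(M) \setminus V(M')]$ is disjoint from $V(M)$, and $M'$ lives on $V \setminus V(M)$. So the per-vertex loads are: a vertex in $V(M)$ carries only its $M$-edge (load $1$); a vertex in $V(M')$ carries its $M'$-edge ($1 - 1/b$) plus at most $k$ copies of $B_1$-edges (contributing at most $k \cdot 1/(kb) = 1/b$), totaling at most $1$; a vertex in $V \setminus V(M) \setminus V(M')$ carries only $B_1$-edges, at most $kb$ copies, hence load at most $1$. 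Thus $x$ is a valid fractional matching, and its total weight is exactly $|M| + (1 - 1/b)|M'| + (1/(kb))|B_1|$.

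The second inequality is proved by the same recipe on $M \cup B_2$: put weight $1 - 1/b$ on each $M$-edge and weight $(\text{multiplicity in }B_2)/(kb)$ on each edge in the support of $B_2 \subseteq G[V(M),\, V \setminus V(M)]$. A vertex of $V(M)$ carries $(1 - 1/b)$ from $M$ plus at most $k \cdot 1/(kb) = 1/b$ from $B_2$, and a vertex of $V \setminus V(M)$ carries at most $kb \cdot 1/(kb) = 1$ from $B_2$, giving a valid fractional matching of total weight $(1 - 1/b)|M| + (1/(kb))|B_2|$. In both cases, the ambient graph is bipartite (as it is a subgraph of the bipartite $G$), so the bipartite matching polytope is integral and the maximum matching in the subgraph is at least the constructed fractional weight.

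I do not anticipate a serious obstacle. The only bookkeeping point is to handle $B_1, B_2$ as multisets: an edge appearing $t$ times in $B_i$ contributes weight $t/(kb)$ to a single edge of the support, which is at most $1$ since $t \le k \le kb$. Once this is in place the per-vertex load checks are one-line calculations, and bipartite integrality closes the argument. Note that nothing in the proof uses any property of $M$, $M'$ being maximal, or of $B_1$, $B_2$ being maximal $b$-matchings; the inequalities hold for any matchings $M, M'$ and any $b$-matchings $B_1, B_2$ with the stated capacity profiles, which will be useful since the algorithm only estimates these sizes via random-greedy sampling.
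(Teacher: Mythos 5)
Your proposal is correct and matches the paper's proof in substance: both construct the same explicit fractional matching (weights $1$ on $M$, $1-\frac{1}{b}$ on $M'$, $\frac{1}{kb}$ per copy in $B_1$, and analogously for case 2) and then invoke integrality of the bipartite matching polytope. You are merely a bit more explicit about treating $B_1, B_2$ as multisets and spelling out the per-vertex load arithmetic, which the paper leaves to the reader; the argument is the same.
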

\begin{proof}
    Since $G$ is bipartite,
    by integrality of the bipartite matching polytope,
    it is enough to exhibit a fractional matching $x$ of the appropriate value $\sum_e x_e$. For case 1, we set
    \[
    x_e = \begin{cases}
        1 & e \in M, \\
        1 - \frac{1}{b} & e \in M', \\
        \frac{1}{kb} & e \in B_1.
    \end{cases}
    \]
    Note that $M$, $M'$ and $B_1$ are pairwise disjoint, and $B_1$ has no edge to $V(M)$. Therefore it is easy to verify that the degree constraints for $x$ are satisfied.
    For case 2, we similarly set
    \[
    x_e = \begin{cases}
        1 - \frac{1}{b} & e \in M, \\
        \frac{1}{kb} & e \in B_2.
    \end{cases}%
    \]%
\end{proof}

\newcommand{\mo}{\ensuremath{|M_1^*|}}
\newcommand{\mop}{\ensuremath{|M_1^{'*}|}}
\newcommand{\mt}{\ensuremath{|M_2^{*}|}}
\newcommand{\mtp}{\ensuremath{|M_2^{'*}|}}
\newcommand{\mtpp}{\ensuremath{|M_2^{''*}|}}

The following lemma states the $(2-\sqrt{2})$-approximation guarantee
of the "maximal matching plus $b$-matching" approach
obtained in prior work,
for both bipartite and general graphs.
We will invoke it for appropriate subgraphs of $G$
to obtain our guarantee.

\begin{lemma}
    \label{lem:585}
    Let $G'$ be a graph,
    $M'$ be any maximal matching in $G'$,
    and $B$ be a maximal $b$-matching in $G'[V(M'), V(G') \setminus V(M')]$
    for vertex capacities $k$ for vertices in $V(M')$
    and $kb$ for vertices in $V(G') \setminus V(M')$,
    where $k > \frac{1}{b \epsilon^3}$
    and $b = 1 + \sqrt{2}$.
    Then:
    \begin{itemize}
        \item for bipartite $G'$, we have $\mu(M' \cup B) \ge (1-\frac{1}{b})|M'|+\frac{1}{kb}|B| \ge (2-\sqrt{2}-\epsilon)\mu(G')$,
        \item 
        for general $G'$, if $B$ is a \emph{random greedy} maximal $b$-matching,
        we still have $\E[\mu(M' \cup B)] \ge (2-\sqrt{2}-\epsilon)\mu(G')$.
    \end{itemize}
\end{lemma}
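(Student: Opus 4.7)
The plan is to prove the two inequalities in the bipartite bullet separately, then treat the general bullet by reusing as much as possible. The bipartite case breaks into (a) a short fractional-matching construction that converts $(|M'|, |B|)$ into a lower bound on $\mu(M' \cup B)$, and (b) a K\"onig-type counting argument that converts $(|M'|, |B|)$ into $(2-\sqrt{2}-\epsilon)\mu(G')$. The counting argument (b) works verbatim for general $G'$; the integrality step (a) does not, and recovering it via randomness is where I expect the main obstacle.

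\textbf{Bipartite first inequality.} I would follow the template of \Cref{lem:fractional_matching}: set $x_e = 1 - 1/b$ for $e \in M'$ and $x_e = 1/(kb)$ for $e \in B$. Since $M'$ and $B$ are edge-disjoint (because $B \subseteq G'[V(M'), U]$ where $U := V(G') \setminus V(M')$ and $M' \subseteq G'[V(M')]$ in the bipartite sense), each vertex in $V(M')$ receives fractional degree $(1-1/b) + k \cdot 1/(kb) = 1$ and each vertex in $U$ receives at most $kb \cdot 1/(kb) = 1$. By integrality of the bipartite matching polytope, $\mu(M' \cup B) \ge \sum_e x_e = (1-1/b)|M'| + |B|/(kb)$.

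\textbf{Shared second inequality.} Let $\mu^* = \mu(G')$ and $m = |M'|$. Fix any max matching $OPT$ and split it into $s$ edges inside $V(M')$ and $t$ edges crossing to $U$; no $OPT$-edge lies entirely inside $U$ by maximality of $M'$. Then $s + t = \mu^*$ together with the vertex budget $2s + t \le |V(M')| = 2m$ yields $t \ge 2(\mu^* - m)$. The $t$ crossing edges form a matching in the bipartite subgraph $G'[V(M'), U]$. Let $V_1 \subseteq V(M')$ and $U_1 \subseteq U$ be the $B$-saturated vertices (degree $k$ and $kb$, respectively). Maximality of $B$ makes $V_1 \cup U_1$ a vertex cover of $G'[V(M'), U]$, so K\"onig gives $|V_1| + |U_1| \ge t \ge 2(\mu^* - m)$. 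The capacity inequalities $|V_1| \le |B|/k$ and $|U_1| \le |B|/(kb)$ then give $|B|/(kb) \ge 2(\mu^* - m)/(b+1)$. Plugging in $b = 1 + \sqrt{2}$, both $1 - 1/b$ and $2/(b+1)$ equal $2 - \sqrt{2}$, so the $m$-terms cancel and $(1-1/b)|M'| + |B|/(kb) \ge (2-\sqrt{2})\mu^*$. An $\epsilon$-slack absorbs the rounding from $kb$ to $\lceil kb \rceil$ and the $k > 1/(b\epsilon^3)$ choice, yielding the $(2-\sqrt{2}-\epsilon)$ ratio. Note that this step only uses bipartiteness of the cut subgraph $G'[V(M'), U]$, so the resulting scalar bound holds for general $G'$ as well.

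\textbf{General first inequality (main obstacle).} The combinatorial bound $(1-1/b)|M'| + |B|/(kb) \ge (2-\sqrt{2})\mu^*$ survives, but bipartite integrality does not, so passing from this scalar value to $\mu(M' \cup B)$ is the delicate step. The plan is to exploit the random greedy structure of $B$ to produce vertex-disjoint length-$3$ augmenting paths for $M'$: for each $M'$-edge $(u,u')$ whose endpoints are both $B$-saturated, a random order over $B$-edges should, with constant probability, pair $u$ and $u'$ with distinct unmatched neighbors via $B$, giving an augmenting path. A linearity-of-expectation bookkeeping over $M'$-edges, in the spirit of the analyses underlying~\cite{KonradN21, BhattacharyaKSW23, Behnezhad23}, would then show that the expected number of successfully augmented $M'$-edges suffices to lift the combinatorial bound to $\E[\mu(M' \cup B)] \ge (2-\sqrt{2}-\epsilon)\mu^*$. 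Controlling the vertex overlap between different candidate augmenting paths using only the randomness of the $b$-matching order (without explicit König-style integrality) is where I expect the main technical work to lie.
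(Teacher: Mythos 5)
The paper does not prove this lemma from scratch: it dispatches it by citation, attributing the bipartite bullet to Lemma~3.3 of \cite{BhattacharyaKSW23} (which is also restated as \cref{lem:585bha}) and the general bullet to Claim~5.5 of \cite{AzarmehrBR24}. Your bipartite argument is correct and is in substance a re-derivation of the first citation: the fractional matching $x_e \in \{1-1/b,\, 1/(kb)\}$ combined with integrality of the bipartite matching polytope gives $\mu(M'\cup B) \ge (1-1/b)|M'| + |B|/(kb)$, and the counting step (at least $t\ge 2(\mu^*-m)$ crossing $OPT$-edges, a vertex cover $V_1\cup U_1$ of $G'[V(M'),U]$ coming from $B$-saturated vertices by maximality, K\"onig, and the capacity bounds $|V_1|\le |B|/k$, $|U_1|\le |B|/(kb)$) collapses precisely at $b=1+\sqrt{2}$ because $1-1/b = 2/(b+1) = 2-\sqrt{2}$. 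You are also right that this counting uses only bipartiteness of the cut graph $G'[V(M'),U]$, so the scalar inequality survives for general $G'$.

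The general bullet is where you flag, and genuinely have, a gap. In a non-bipartite $G'$, the fractional $x$ you defined need not be dominated by any integral matching: $M'\cup B$ can contain triangles (an $M'$-edge $(u,u')$ plus two $B$-edges from $u,u'$ to a common vertex $w\in U$), and on such a triangle $x$ can have value $3/2$ while $\mu$ contributes only $1$. So $(1-1/b)|M'|+|B|/(kb)$ is not a valid lower bound on $\mu(M'\cup B)$ in the general case, and you must instead argue directly about $\E[\mu(M'\cup B)]$. Your sketch --- that a random-greedy construction of $B$ lets enough $B$-saturated $M'$-edges find \emph{distinct} $U$-neighbors to form vertex-disjoint length-3 augmenting paths --- identifies the right mechanism and matches the spirit of \cite{AzarmehrBR24}, but you leave the essential part undone: a quantitative bound on the contention between different $M'$-edges competing for the same $U$-vertices, and a verification that what survives still yields the $(2-\sqrt{2}-\epsilon)$ factor. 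Until that calculation is carried out, the general bullet remains unproved, which is exactly why the paper delegates it to prior work rather than redoing it.
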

\begin{proof}
    The first statement is the same as \cref{lem:585bha}, and shown as Lemma 3.3 in~\cite{BhattacharyaKSW23}.
    The second statement is shown as Claim 5.5 in \cite{AzarmehrBR24}.
\end{proof}

The following lemma is the crux of our approximation ratio analysis.

\begin{lemma} \label{lem:apx051bipartite}
    In a bipartite graph $G$,
    let $M$, $M'$, $B_1$ and $B_2$ be as in the description of \Cref{alg:sublinear-first}. Then
    \[\max \left[|M| + (1 - \frac{1}{b}) |M'| + \frac{1}{kb} |B_1|, (1 - \frac{1}{b}) |M| + \frac{1}{kb} |B_2| \right] \geq 0.5109  \cdot\mu(G) .\]
\end{lemma}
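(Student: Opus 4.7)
My plan is to reduce both terms inside the $\max$ to the "maximal matching plus $b$-matching" guarantee of \cref{lem:585}, applied to two different subgraphs of $G$, and then carry out a short two-variable optimization. Let $\mu^\star = \mu(G)$, write $\alpha = |M|/\mu^\star$ and $\gamma = \mu(G[V\setminus V(M)])/\mu^\star$, and fix a maximum matching $\mathrm{OPT}$. Partition $\mathrm{OPT}$ into $O_{MM}$, $O_{MX}$, $O_{XX}$ according to whether an edge lies inside $V(M)$, crosses, or lies inside $V\setminus V(M)$.

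For the first term, $M'$ is a maximal matching in $G[V\setminus V(M)]$ and $B_1$ is a maximal $b$-matching between $V(M')$ and $V\setminus V(M)\setminus V(M')$ there, so \cref{lem:585} yields $(1-1/b)|M'| + (1/(kb))|B_1| \geq (2-\sqrt{2}-\epsilon)\gamma\mu^\star$, and hence the first term is at least $(\alpha + (2-\sqrt{2})\gamma)\mu^\star$. A standard $\mathrm{OPT}$-restriction argument gives $\gamma \geq \max(0, 1-2\alpha)$, since $\mathrm{OPT}$ uses at most $2|M|$ edges touching $V(M)$.

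For the second term, let $H$ be the spanning subgraph of $G$ with edge set $M \cup E_G(V(M), V\setminus V(M))$. Every edge of $H$ touches $V(M)$, so $M$ is maximal in $H$, and $B_2$ is precisely a maximal $b$-matching in the bipartite part $H[V(M), V\setminus V(M)]$. Hence \cref{lem:585} gives the second term $\geq (2-\sqrt{2})\mu(H)$. I lower-bound $\mu(H)$ in two ways: trivially $\mu(H) \geq |M|$, and also $\mu(H) \geq |O_{MX}|$ because $O_{MX} \subseteq E(H)$ is already a matching. The combinatorial crux is the vertex-counting inequality $2|O_{MM}| + |O_{MX}| \leq 2|M|$ (both types of edges consume distinct $V(M)$-vertices). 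Combined with $|O_{XX}| \leq \mu(G[V\setminus V(M)])$ and $\mu^\star = |O_{MM}| + |O_{MX}| + |O_{XX}|$, this yields $|O_{MX}| \geq 2(\mu^\star - |M| - \mu(G[V\setminus V(M)]))$, so the second term is at least $(2-\sqrt{2})\cdot \max(\alpha,\, 2(1-\alpha-\gamma))\cdot \mu^\star$.

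The proof then reduces to minimizing $\max\bigl(\alpha + (2-\sqrt{2})\gamma,\; (2-\sqrt{2})\max(\alpha, 2(1-\alpha-\gamma))\bigr)$ over $\alpha \geq 0$ and $\gamma \geq \max(0, 1-2\alpha)$. The adversary's minimum is attained on the boundary where the two expressions are equal, $\gamma = 1-2\alpha$, and $2(1-\alpha-\gamma) \geq \alpha$; equating the two bounds yields $\alpha = (2-\sqrt{2})/(7-4\sqrt{2}) = (6+\sqrt{2})/17$ and common value $(20-8\sqrt{2})/17 > 0.5109$. The main obstacle I expect is the vertex-counting step that produces the bound on $|O_{MX}|$: without it, the simple bound $(2-\sqrt{2})|M|$ on the second term is too weak in the intermediate regime $\alpha \approx 1/2$, where $M$ is not maximal in $G$ but the first term's guarantee $|M| + (2-\sqrt{2})(\mu^\star - 2|M|)$ falls just below $0.5109\,\mu^\star$.
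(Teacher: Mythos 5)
Your proof is correct and follows essentially the same route as the paper: both apply \cref{lem:585} to $G[V\setminus V(M)]$ for the first term and to a subgraph spanned by $V(M)$ and its incident edges for the second, then solve the resulting two-variable optimization, landing on the same extremal point $\alpha = (6+\sqrt{2})/17$ and value $(20-8\sqrt{2})/17 = \tfrac{4(5-2\sqrt{2})}{17} > 0.5109$. The only cosmetic differences are that the paper takes the slightly larger subgraph $G[V(M)]\cup G[V(M),V\setminus V(M)]$ in case~2, which gives $\mu(G')\ge|O_{MM}|+|O_{MX}|$ (dominating your $\max(|M|,|O_{MX}|)$ pointwise but coinciding at the worst case $|O_{MM}|=0$), and that it phrases the final minimization as a $\beta$-weighted average of the two bounds (with $\beta = 2\alpha^*$) rather than your direct boundary analysis.
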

\begin{proof}
    Fix a maximum matching $M^*$ in $G$,
    and partition its edges as follows (see \cref{fig2}):
    \begin{itemize}
        \item $M_2^*$ are those with both endpoints in $V(M)$,
        \item $M_2^{'*}$ are those with one endpoint in $V(M)$ and the other in $V(M')$,
        \item $M_2^{''*}$ are those with both endpoints in $V(M')$,
        \item $M_1^*$ are those with one endpoint in $V(M)$ and the other in $V \setminus V(M) \setminus V(M')$,
        \item $M_1^{'*}$ are those with one endpoint in $V(M')$ and the other in $V \setminus V(M) \setminus V(M')$.
    \end{itemize}

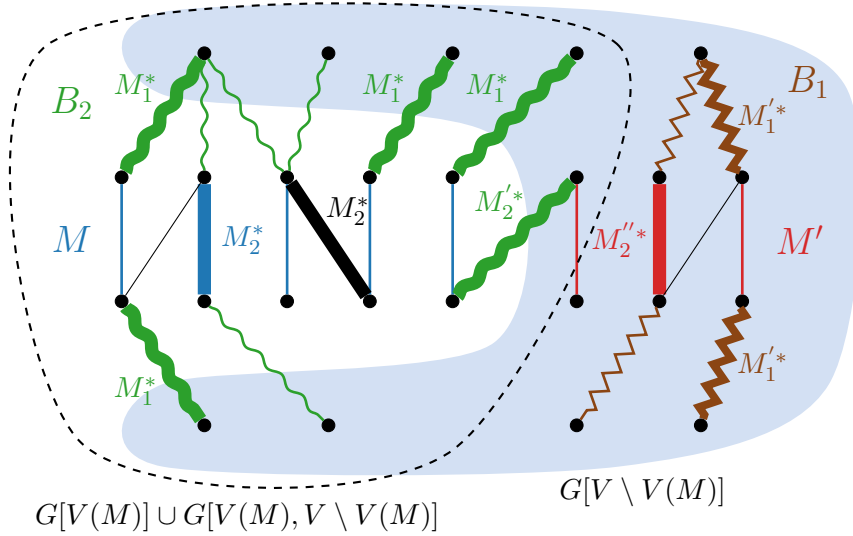
\begin{figure}[t]
    \centering
    \begin{tikzpicture}[scale=1.1, every node/.style={circle, draw, fill=black, inner sep=1.7pt}]

    \node (b1) at (1,0) {};
    \node (c1) at (2,0) {};
    \node (d1) at (3,0) {};
    \node (e1) at (4,0) {};
    \node (f1) at (5,0) {};
    \node (b2) at (1,1.5) {};
    \node (c2) at (2,1.5) {};
    \node (d2) at (3,1.5) {};
    \node (e2) at (4,1.5) {};
    \node (f2) at (5,1.5) {};
    \draw[line width=1pt, myBlue] (b1) -- (b2);
    \draw[line width=5pt, myBlue] (c1) -- (c2) node[draw=none, fill=none, midway, right] {$M_2^*$};
    \draw[line width=1pt, myBlue] (d1) -- (d2);
    \draw[line width=1pt, myBlue] (e1) -- (e2);
    \draw[line width=1pt, myBlue] (f1) -- (f2);

\node[draw=none, fill=none, color=myBlue, font=\Large] at (0.4, 0.75) {$M$};
\node[draw=none, fill=none, color=myRed, font=\Large] at (9.2, 0.75) {$M'$};
\node[draw=none, fill=none, color=myBrown, font=\Large] at (9.3, 2.65) {$B_1$};
\node[draw=none, fill=none, color=myGreen, font=\Large] at (0.4, 2.4) {$B_2$};
\node[draw=none, fill=none] at (7.3, -2.3) {$G[V \setminus V(M)]$};
\node[draw=none, fill=none] at (2.4, -2.6) {$G[V(M)] \cup G[V(M), V \setminus V(M)]$};
    
    \draw (b1) -- (c2);
    \draw[line width=5pt] (e1) -- (d2) node[draw=none, fill=none, pos=0.55, inner sep=0pt, above right] {$M_2^*$};

    \node (up1) at (2,3) {};
    \node (up2) at (3.5,3) {};
    \node (up3) at (5,3) {};
    \node (up4) at (6.5,3) {};
    \node (up5) at (8,3) {};
    \node (do1) at (2,-1.5) {};
    \node (do2) at (3.5,-1.5) {};
    \node (do4) at (6.5,-1.5) {};
    \node (do5) at (8,-1.5) {};

    \draw[bm,line width=5pt] (b2) -- (up1) node[draw=none, fill=none, pos=0.5, above left] {$M_1^*$};
    \draw[bm] (c2) -- (up1);
    \draw[bm] (d2) -- (up1);
    \draw[bm] (d2) -- (up2);
    \draw[bm,line width=5pt] (e2) -- (up3)  node[draw=none, fill=none, pos=0.5, above left] {$M_1^*$};
    \draw[bm,line width=5pt] (f2) -- (up4)  node[draw=none, fill=none, pos=0.5, above left] {$M_1^*$};

    \draw[bm,line width=5pt] (b1) -- (do1) node[draw=none, fill=none, pos=0.5, below left] {$M_1^*$};
    \draw[bm] (c1) -- (do2);


    \node (g1) at (6.5,0) {};
    \node (h1) at (7.5,0) {};
    \node (i1) at (8.5,0) {};
    \node (g2) at (6.5,1.5) {};
    \node (h2) at (7.5,1.5) {};
    \node (i2) at (8.5,1.5) {};
    \draw[line width=1pt, myRed] (g1) -- (g2);
    \draw[line width=5pt, myRed] (h1) -- (h2) node[draw=none, fill=none, midway, inner sep=0pt, left] {$M_2^{''*}$};
    \draw[line width=1pt, myRed] (i1) -- (i2);

    \draw[bm,line width=5pt] (f1) -- (g2)  node[draw=none, fill=none, pos=0.6, inner sep=0pt, above left] {$M_2^{'*}$};

    \draw (h1) -- (i2);

    \draw[bmtwo,line width=3pt] (i1) -- (do5) node[draw=none, fill=none, midway, right] {$M_1^{'*}$};
    \draw[bmtwo] (h1) -- (do4);
    \draw[bmtwo] (h2) -- (up5);
    \draw[bmtwo,line width=3pt] (i2) -- (up5)  node[draw=none, fill=none, midway, right] {$M_1^{'*}$};

    \begin{scope}[on background layer]
    \fill[SecondaryLightBlue!55] plot [smooth cycle] coordinates {(9.75,2.6) (7,3.5) (1.5,3.5) (1.5,2.5) (5.5,2) (5.5,-0.5) (1.5,-1) (1.5,-2) (7,-2) (9.55,-1)};
    \end{scope}

    \draw[dashed, black, thick] plot [smooth cycle] coordinates {(5.0, -1.8)  (7.0, 3.1) (0.1, 3.1) (0.5, -1.7)};
\end{tikzpicture}
\vspace{-6em}
\caption{Illustration for the proof of \cref{lem:apx051bipartite}.
The thick edges belong to a fixed maximum matching $M^*$.
Each of them is labeled with its partition ($M_2^*$, $M_2^{'*}$, $M_2^{''*}$, $M_1^*$, or $M_1^{'*}$).
The two subgraphs for which we invoke \Cref{lem:585} are marked
(case 1 -- highlighted in light blue, case 2 -- dashed line).
}
    \label{fig2}
\end{figure}
    
    Since $M \cup M'$ is maximal, $M^*$ cannot have edges with no endpoints in $V(M) \cup V(M')$.
    We thus have 
    \begin{equation}
        \mu(G) = |M^*| = \mo + \mop + \mt + \mtp + \mtpp . \label{eq1}
    \end{equation}
    Also, by maximality of $M^*$ and simple counting,
    \begin{align}
        |M| &= \mt + \frac12 \mo + \frac12 \mtp . \label{eq2}
    \end{align}
    We will use \cref{lem:585} to analyze both cases in \Cref{alg:sublinear-first}.
    For case 1,
    we can use $G' = G[V \setminus V(M)]$;
    in this graph, $M'$ is a maximal matching,
    and $B_1$ is a $b$-matching as in the statement of \cref{lem:585}
    (see \cref{fig2}), thus we have
    \begin{equation*}
       (1 - \frac{1}{b}) |M'| + \frac{1}{kb} |B_1| \ge (2-\sqrt{2}-\epsilon) \mu(G[V \setminus V(M)]) \ge (2-\sqrt{2}-\epsilon) (\mop + \mtpp)
    \end{equation*}
    since $M_1^{'*} \cup M_2^{''*}$ is a matching in $G' = G[V \setminus V(M)]$.
    With \cref{eq2} this gives
    \begin{equation}
        |M| + (1 - \frac{1}{b}) |M'| + \frac{1}{kb} |B_1| \ge \mt + \frac12 \mo + \frac12 \mtp + (2-\sqrt{2}-\epsilon) (\mop + \mtpp). \label{eqcase1}
    \end{equation}
    For case 2,
    we can instead use $G' = G[V(M)] \cup G[V(M), V \setminus V(M)]$;
    in this graph, $M$ is a maximal matching,
    and $B_2$ is a $b$-matching as in the statement of \cref{lem:585}
    (see \cref{fig2}),
    thus
    \begin{equation}
        (1 - \frac{1}{b}) |M| + \frac{1}{kb} |B_2| \ge (2-\sqrt{2}-\epsilon) \mu(G') \ge  (2-\sqrt{2}-\epsilon) (\mt + \mtp + \mo) \label{eqcase2}
    \end{equation}
    since $M_2^{*} \cup M_2^{'*} \cup M_1^*$ is a matching in $G'$.
    Using \cref{eqcase1,eqcase2}, we can bound the left-hand side of this lemma's statement as
    \begin{align*}
        \max \left[...\right] &\ge \max [
        \mt + \frac12 \mo + \frac12 \mtp + (2-\sqrt{2}-\epsilon) (\mop + \mtpp), \\
        &\qquad \quad \ \; (2-\sqrt{2}-\epsilon) (\mt + \mtp + \mo)
        ] \\
        &\ge ...
    \end{align*}
    We bound the maximum by a weighted average with weights $\beta$ and $1-\beta$ for some $\beta \in [0,1]$ to be determined ($\max(x,y) \ge \beta x + (1-\beta) y$):
    \begin{align*}
    ... &\ge \mt (\beta + (2-\sqrt{2}-\epsilon)(1-\beta)) \\&\qquad + (\mo + \mtp) \left( \frac{\beta}{2} + (2-\sqrt{2}-\epsilon)(1-\beta) \right) \\&\qquad + (\mop + \mtpp) (2-\sqrt{2}-\epsilon) \beta
    \\
    &\ge 
    (\mt + \mo + \mtp) \left( \frac{\beta}{2} + (2-\sqrt{2}-\epsilon)(1-\beta) \right) + (\mop + \mtpp) (2-\sqrt{2}-\epsilon) \beta
    \\
    &\stackrel{(*)}{\ge}
    (\mt + \mo + \mtp) \gamma + (\mop + \mtpp) \gamma
    \\
    &\stackrel{\eqref{eq1}}{=} \gamma \cdot \mu(G).
    \end{align*}
    We want $(*)$ to hold for some $\gamma$ (the approximation ratio) as large as possible.
    For $(*)$ to hold, we need to satisfy:
    \begin{align*}
        \frac{\beta}{2} + (2-\sqrt{2}-\epsilon) (1-\beta) &\ge \gamma,\\
        (2-\sqrt{2}-\epsilon) \beta &\ge \gamma.
    \end{align*}
    By solving for $\frac{\beta}{2} + (2-\sqrt{2}) (1-\beta) = (2-\sqrt{2}) \beta$ we get $\beta = \frac{12 + 2 \sqrt{2}}{17}$ and $\gamma = \frac{4(5-2\sqrt{2})}{17} - O(\epsilon) > 0.5109$.
\end{proof}

\begin{restatable}{lemma}{approxguarantee}
\label{lem:approx-guarantee}
    Let $\max(\mu_1, \mu_2)$ be the output of \Cref{alg:sublinear-first}. With high probability, it holds that
    \begin{align*}
        0.5109\cdot \mu(G) - o(n) \leq \max(\mu_1, \mu_2) \leq \mu(G).
    \end{align*}
\end{restatable}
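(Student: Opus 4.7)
The plan is to derive \cref{lem:approx-guarantee} by combining three ingredients. First, \cref{lem:apx051bipartite} gives, deterministically in terms of any valid choice of $M$, $M'$, $B_1$, $B_2$, that the maximum of the two ``idealized'' quantities $|M|+(1-1/b)|M'|+(1/kb)|B_1|$ and $(1-1/b)|M|+(1/kb)|B_2|$ is at least $0.5109\cdot \mu(G)$. Second, \cref{lem:fractional_matching} shows that each of these two idealized quantities is at most $\mu(G)$. Third, a high-probability concentration statement says that the estimators $\mu_{M'}$, $\mu_{B_1}$, $\mu_{B_2}$ actually produced by \Cref{alg:first-case} and \Cref{alg:second-case} lie within additive $o(n)$ of $|M'|$, $|B_1|$, $|B_2|$. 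Since the coefficients $1$, $1-1/b$, $1/(kb)$ in the definitions of $\mu_1$ and $\mu_2$ are absolute constants, the gap between each $\mu_j$ and its idealized counterpart is also $o(n)$, so the lemma then follows by direct substitution into the first two ingredients.

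The main step is the third ingredient. We will condition on the random permutation $\pi$ that defines $M'$, and on the analogous permutations defining the random greedy maximal $b$-matchings $B_1$ and $B_2$, so that $|M'|$, $|B_1|$, $|B_2|$ become fixed. Each indicator $X_i$, $Y_i$, $Z_i$ is then an independent $\{0,1\}$ random variable, and by \cref{prop:rgmm} combined with the scaling chosen in the algorithms, $\mathbb{E}[nX/(2r)]=|M'|$ and analogously for $B_1$ and $B_2$. Hoeffding's inequality applied to the sum of $r=6\log^3 n$ independent bounded samples shows that each empirical mean deviates from its expectation by at most $1/\log n$ except with probability at most $2\exp(-12\log n)=O(n^{-12})$; scaling by $n/2$ and then subtracting the algorithm's bias term $n/(2\log n)$ converts this into a one-sided additive bound $|M'|-n/\log n\leq \mu_{M'}\leq |M'|$, and analogously for $\mu_{B_1}$ and $\mu_{B_2}$. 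A union bound over these three events, together with the high-probability event of \cref{lem:sparsification-maximal}, will give that all the required bounds hold simultaneously with high probability.

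With the concentration in hand, the rest is immediate. The upper bound $\max(\mu_1,\mu_2)\leq \mu(G)$ comes from combining the one-sided half of the concentration (that $\mu_{M'}$, $\mu_{B_1}$, $\mu_{B_2}$ do not overshoot their targets) with \cref{lem:fractional_matching}. The lower bound $\max(\mu_1,\mu_2)\geq 0.5109\cdot \mu(G) - o(n)$ follows by combining the other half of the concentration with \cref{lem:apx051bipartite}. There is no substantive obstacle in this plan: the only bit of bookkeeping that requires care is verifying that the algorithm's implicit $n$-scaling in each estimator is the correct one for the vertex population from which the $r$ query vertices are actually drawn (in particular for the blown-up graphs $G'_1$ and $G'_2$), but any mismatch in constants is easily absorbed by the generous $o(n)$ additive slack in the statement.
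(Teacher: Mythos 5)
Your proposal is correct and follows essentially the same route as the paper's own proof: it applies \cref{lem:apx051bipartite} for the $0.5109$ lower bound, \cref{lem:fractional_matching} for the $\leq \mu(G)$ upper bound, and a Chernoff-type concentration argument on the indicators $X_i, Y_i, Z_i$, using that the $n/(2\log n)$ offset in the estimators makes the high-probability sandwich one-sided in each direction. The one point where you are slightly cleaner than the paper is in explicitly conditioning on the permutations $\pi$ before invoking independence of the $r$ vertex samples: the paper writes its concentration step directly in terms of $\E|M'|$, $\E|B_1|$, $\E|B_2|$ (over the joint randomness), which is a mild imprecision since the $X_i$ are only conditionally i.i.d.\ given $\pi$; your two-step argument (condition on the permutations, concentrate around $|M'|$, $|B_1|$, $|B_2|$, then observe that \cref{lem:apx051bipartite} holds deterministically for every realization) is the more careful version. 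The difference between Hoeffding and the paper's multiplicative Chernoff is immaterial, and your flag about the $n$ vs.\ $|V(G'_1)|$ or $|V\setminus V(M)|$ normalization is a fair bookkeeping concern that indeed gets absorbed into the $o(n)$ slack (up to constant factors involving $k$).
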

The proof of \cref{lem:approx-guarantee} is routine.
\begin{proof}
    By \Cref{lem:apx051bipartite}, we have
    \begin{align}\label{eq:range-estimate}
        0.5109  \cdot\mu(G) \leq \max \left[|M| + (1 - \frac{1}{b}) \E|M'| + \frac{1}{kb} \E|B_1|, (1 - \frac{1}{b}) |M| + \frac{1}{kb} \E|B_2| \right] \leq \mu(G).
    \end{align}
    Let $X_i$, $Y_i$, and $Z_i$ be as defined in \Cref{alg:first-case} and \Cref{alg:second-case}. By the definition of $X_i$, $Y_i$, and $Z_i$ we have
    \begin{align*}
        \E[X_i] = \Pr[X_i = 1] = \frac{2\E|M'|}{n},\\
        \E[Y_i] = \Pr[Y_i = 1] = \frac{2\E|B_1|}{n},\\
        \E[Z_i] = \Pr[Z_i = 1] = \frac{2\E|B_2|}{n}.
    \end{align*}
    Thus,
    \begin{align*}
        \E[X] = \frac{2r \cdot \E|M'|}{n},\\
        \E[Y] = \frac{2r \cdot \E|B_1|}{n},\\
        \E[Z] = \frac{2r \cdot \E|B_2|}{n}.
    \end{align*}
    Then, using Chernoff bound, we obtain
    \begin{align*}
        \Pr[|X - \E[X]| \geq \sqrt{6\E[X]\log n}] \leq 2\exp\left( \frac{6 \E[X] \log n}{3\E[X]}\right) = \frac{2}{n^2},\\
        \Pr[|Y - \E[Y]| \geq \sqrt{6\E[Y]\log n}] \leq 2\exp\left( \frac{6 \E[Y] \log n}{3\E[Y]}\right) = \frac{2}{n^2},\\
        \Pr[|Z - \E[Z]| \geq \sqrt{6\E[Z]\log n}] \leq 2\exp\left( \frac{6 \E[Z] \log n}{3\E[Z]}\right) = \frac{2}{n^2}.
    \end{align*}
    Therefore, with a probability of $1-2/n^2$,
    \begin{align*}
        \mu_{M'} = \frac{nX}{2r} - \frac{n}{2\log n} &\in \frac{n(\E[X] \pm \sqrt{6\E[X]\log n})}{2r} - \frac{n}{2\log n}\\
        &= \E|M'| - \frac{n}{2\log n} \pm \sqrt{\frac{3n \E|M'| \log n}{2r}}\\
        & = \E|M'| - \frac{n}{2\log n} \pm \sqrt{\frac{n \E|M'|}{4\log^2 n}} & (\text{Since  $r = 6 \log^3 n$})\\
        & = \E|M'| - \frac{n}{2\log n} \pm \frac{n}{2\log n}.
    \end{align*}
    By repeating the same argument for $Y$ and $Z$, we get
    \begin{align*}
        \mu_{B_1} \in  \E|B_1| - \frac{n}{2\log n} \pm \frac{n}{2\log n}, \\
        \mu_{B_2} \in  \E|B_2| - \frac{n}{2\log n} \pm \frac{n}{2\log n}.
    \end{align*}
    Plugging \Cref{eq:range-estimate} in the bounds obtained above implies
    \begin{align*}
        \max(\mu_1, \mu_2) &= \max \left[|M| + (1 - \frac{1}{b}) \mu_{M'} + \frac{1}{kb} \mu_{B_1}, (1 - \frac{1}{b}) |M| + \frac{1}{kb} \mu_{B_2} \right]\\
        & \leq \max \left[|M| + (1 - \frac{1}{b}) \E|M'| + \frac{1}{kb} \E|B_1|, (1 - \frac{1}{b}) |M| + \frac{1}{kb} \E|B_2| \right]\\
        & \leq \mu(G).
    \end{align*}
    On the other hand, we have
    \begin{align*}
        \max(\mu_1, \mu_2) &= \max \left[|M| + (1 - \frac{1}{b}) \mu_{M'} + \frac{1}{kb} \mu_{B_1}, (1 - \frac{1}{b}) |M| + \frac{1}{kb} \mu_{B_2} \right]\\
        & \geq \max \left[|M| + (1 - \frac{1}{b}) (\E|M'| - \frac{n}{\log n}) + \frac{1}{kb} (\E|B_1|- \frac{n}{\log n}), (1 - \frac{1}{b}) |M| + \frac{1}{kb} (\E|B_2|- \frac{n}{\log n}) \right]\\
        & \geq \max \left[|M| + (1 - \frac{1}{b}) \E|M'|  + \frac{1}{kb} \E|B_1|, (1 - \frac{1}{b}) |M| + \frac{1}{kb} \E|B_2| \right] - \frac{2n}{\log n}\\
        & \geq 0.5109\cdot \mu(G) - \frac{2n}{\log n}
    \end{align*}
    which completes the proof.
\end{proof}

\subsection{Running Time Analysis}

For the analysis of the running time, we use a crucial property of random greedy maximal matching algorithm that was proved recently in \cite{steiner-tree-itcs}.

\begin{proposition}[Lemma 5.14 of \cite{steiner-tree-itcs}]\label{prop:outdegree-bound-1}
    Let $Q(v)$ be the expected number of times that the oracle queries an adjacent edge of $v$ if we start the oracle calls from a random vertex, for a random permutation over the edges of the graph $G$ when running random greedy maximal matching. It holds that $Q(v) = \wt{O}(\deg_G(v) / |V(G)|)$.
\end{proposition}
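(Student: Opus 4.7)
The plan is to reduce the per-vertex quantity $Q(v)$ to a per-edge probability bound and then exploit the symmetry of the random permutation. Write
\[
Q(v) \;=\; \sum_{e \ni v} p(e), \qquad \text{where } p(e) := \Pr_{v_0, \pi}\bigl[e \text{ is queried during the oracle run starting at } v_0\bigr].
\]
Hence it suffices to prove $p(e) = \wt{O}(1/n)$ for every edge $e$; summing over the $\deg_G(v)$ edges incident to $v$ then yields the claimed bound $Q(v) = \wt{O}(\deg_G(v)/n)$.

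Next, I would analyze $p(e)$ via the recursive structure of the oracle from \cref{prop:rgmm}. To decide whether $u$ is matched in $\GMM(G,\pi)$, the oracle processes the edges incident to $u$ in increasing order of $\pi$-rank and, upon reaching an edge $(u,w)$, recursively checks whether $w$ is matched by an edge of strictly smaller $\pi$-rank. For a fixed $\pi$, let $C_\pi(e) \subseteq V$ denote the set of starting vertices $v_0$ from which the oracle ends up examining $e$. Then $p(e) = \Exp_\pi[|C_\pi(e)|]/n$, so the target reduces to $\Exp_\pi[|C_\pi(e)|] = \wt{O}(1)$, uniformly over $e$.

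To establish this uniform per-edge bound I would adapt Behnezhad's amortized analysis. His global bound implies that the total expected work summed over all start vertices satisfies $\sum_{v_0} \Exp_\pi[\text{edges examined from } v_0] = n \cdot \wt{O}(\bar{d}(G)) = \wt{O}(m)$, which rearranges to $\sum_e \Exp_\pi[|C_\pi(e)|] = \wt{O}(m)$. On average this gives $\wt{O}(1)$ per edge, but we need the bound for \emph{every} $e$. The idea is to induct on the $\pi$-rank of $e$: conditioning on the outcomes of all edges of smaller rank, the remaining edges form a uniformly random permutation, so by a symmetry argument the expected size of the backward exploration tree reaching $e$ matches the average, i.e., is $\wt{O}(1)$.

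The main obstacle I foresee is the correlations between edges sharing an endpoint, which obstruct a naive edge-by-edge independence argument and are the reason the uniform per-edge bound does not follow immediately from the global $\wt{O}(\bar{d})$ estimate. Formalizing the inductive amortization so as to control these correlations is exactly the content of Lemma~5.14 of \cite{steiner-tree-itcs}; once its per-edge conclusion $p(e) = \wt{O}(1/n)$ is in hand, the proposition follows directly from the decomposition in the first step.
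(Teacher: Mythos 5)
The paper does not prove this proposition; it is stated as an imported result (Lemma~5.14 of \cite{steiner-tree-itcs}), so there is no in-paper proof to compare against. What you have written is a plan, and it has a genuine gap at the step that matters.

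Your reduction $Q(v) = \sum_{e \ni v} p(e)$ and the observation that a uniform per-edge bound $p(e) = \wt{O}(1/n)$ would suffice are fine (modulo the implicit assumption that the oracle memoizes, so each edge is examined at most once per run --- otherwise $p(e)$ should be an expected count, not a probability). The problem is the argument you give for the per-edge bound. You correctly observe that Behnezhad's total-work bound yields $\sum_e \Exp_\pi[|C_\pi(e)|] = \wt{O}(m)$, i.e.\ the \emph{average} per-edge bound, and that this must be upgraded to a uniform one. The step you propose --- condition on the low-rank prefix of $\pi$, note the remainder is a uniform permutation, and appeal to ``symmetry'' to conclude that the expectation for a particular $e$ matches the average --- does not follow: after conditioning, the underlying graph is still fixed and its edges are \emph{not} exchangeable. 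An edge whose endpoints have large degree, or which sits in a dense local region, can a priori be reached by the backward exploration from many more start vertices than a typical edge; there is no symmetry acting transitively on edges that lets you transfer the global average to a fixed $e$. Closing exactly this gap is the content of Lemma~5.14 of \cite{steiner-tree-itcs}, and your last paragraph concedes as much, which renders the outline circular. A real proof must trace the oracle's exploration directly (for instance via the rank-decreasing query-path structure that underlies the analyses of \cite{Behnezhad21} and its predecessors) and bound, for a fixed edge $e$, the expected number of start vertices whose exploration reaches $e$, rather than invoke symmetry over edges.
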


\begin{proposition}[Corollary 5.15 of \cite{steiner-tree-itcs}]\label{prop:outdegree-bound}
    Let $T(v)$ be the expected time needed to return a random neighbor of vertex $v$. Then, the expected time to run the random greedy maximal matching oracle for a random vertex and a random permutation in graph $G$ is $\sum_{v \in V(G)} \wt{O}(T(v) \cdot \deg_G(v) / |V(G)|)$.
\end{proposition}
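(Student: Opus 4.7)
The plan is to derive this statement as a direct consequence of \Cref{prop:outdegree-bound-1}. The RGMM oracle's execution, when started from a random vertex with a random permutation, proceeds by recursively sampling random neighbors of vertices and checking the order in which incident edges appear in~$\pi$. Every unit of work the oracle performs can be charged to the vertex $v$ whose adjacency list is being probed at that step: the elementary operation is ``retrieve a random neighbor of $v$,'' whose expected cost is $T(v)$ by assumption.

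First, I would separate the oracle's total running time into per-vertex contributions. Let $N(v)$ be the (random) number of times the oracle asks for a random neighbor of $v$ during a single call on a uniformly random starting vertex and uniformly random permutation. Each such probe inspects exactly one edge adjacent to $v$, so $N(v)$ coincides (up to polylogarithmic factors absorbed in $\wt{O}(\cdot)$) with the quantity $Q(v)$ controlled by \Cref{prop:outdegree-bound-1}; in particular $\E[N(v)] = \wt{O}(\deg_G(v)/|V(G)|)$.

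Next, I would apply linearity of expectation and a conditioning argument: conditioned on the random choices made so far by the oracle, the cost of the next random-neighbor query at $v$ has expectation $T(v)$ (this is the meaning of the hypothesis defining $T(v)$, which is independent of the RGMM state), so by Wald-style reasoning the expected total work spent on neighbor queries at $v$ is $\E[N(v)] \cdot T(v) = \wt{O}\bigl(T(v)\cdot \deg_G(v)/|V(G)|\bigr)$. Summing over $v \in V(G)$ gives the stated bound.

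The only step that requires care—and the one where I would expect the real obstacle to lie—is justifying the factorization of $\E[N(v)\cdot T_v^{(i)}]$ into $\E[N(v)]\cdot T(v)$, since $N(v)$ and the individual per-query costs could in principle be correlated through the random permutation and the sequence of recursive calls. I would handle this by noting that the per-query subroutine that realizes $T(v)$ (e.g., rejection sampling from the adjacency list of $G$ to find a neighbor in the relevant induced subgraph) uses fresh randomness at each call, independent of $\pi$ and of the RGMM recursion tree, so the expected cost of each individual neighbor retrieval at $v$ is $T(v)$ regardless of when in the execution it occurs. With that independence in hand, the result follows immediately from \Cref{prop:outdegree-bound-1}.
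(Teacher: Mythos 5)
Your proposal is correct and matches the intended derivation: the paper states this as a corollary imported from \cite{steiner-tree-itcs} with no proof of its own, and the natural derivation is exactly what you give — charge each neighbor-retrieval to the vertex being probed, use \Cref{prop:outdegree-bound-1} to bound the expected number of probes at each vertex by $\wt{O}(\deg_G(v)/|V(G)|)$, and factor out $T(v)$ via linearity of expectation after observing that the per-probe cost uses fresh randomness independent of the RGMM recursion (so the cost and the returned neighbor, which drives the recursion, are independent). Your explicit treatment of the potential correlation between $N(v)$ and the per-query costs is the right thing to worry about and is resolved correctly.
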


\begin{lemma}\label{lem:first-alg-time}
    \Cref{alg:first-case} runs in $\widetilde{O}( \bar{d}(G) \cdot \sqrt{n})$ time in expectation.
\end{lemma}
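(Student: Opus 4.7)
The plan is to split the runtime of \Cref{alg:first-case} into (i) estimating $\mu_{M'}$, which invokes the RGMM oracle of \Cref{prop:rgmm} on $H := G[V \setminus V(M)]$ for $r = \widetilde{O}(1)$ random vertices, and (ii) estimating $\mu_{B_1}$, which performs $r$ outer RGMM calls on $G'_1$ with nested inner RGMM calls on $H$. For part~(i), since we only have adjacency-list access to $G$, a random neighbor of $x$ in $H$ can be obtained by rejection sampling from $N_G(x)$ in expected time $O(\deg_G(x)/\deg_H(x))$; plugging this into \Cref{prop:outdegree-bound} yields per-call expected cost $\sum_{x \in V(H)} \widetilde{O}(\deg_G(x)/|V(H)|) = \widetilde{O}(\bar{d}(G))$, hence $\widetilde{O}(\bar{d}(G))$ in total for part~(i).

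For part~(ii), I would analyze the expected cost of a single outer call. Applying \Cref{prop:outdegree-bound} on $G'_1$ bounds it by $\sum_{v \in V(G'_1)} \widetilde{O}(\Touter(v) \cdot \deg_{G'_1}(v)/|V(G'_1)|)$, where $\Touter(v)$ is the expected time to sample one random neighbor of $v$ in $G'_1$. Writing $A = V(M')$, $B = V(H) \setminus V(M')$, and $H' = G[A,B]$, a copy $v$ of $u \in V(H)$ has $\deg_{G'_1}(v) = \Theta(\deg_{H'}(u))$ while $|V(G'_1)| = \Theta(n)$. The sampling procedure---draw $w$ from $N_G(u)$, discard in $O(1)$ time if $w \in V(M)$ using the explicit matching $M$, otherwise invoke the inner oracle on $w$ to decide which side of the $(A,B)$-bipartition it belongs to---satisfies, by Wald's identity,
\begin{align*}
\Touter(v) = O\!\left(\frac{\deg_G(u)}{\deg_{H'}(u)}\right) + \frac{1}{\deg_{H'}(u)}\sum_{w \in N_H(u)} \Tinner(w),
\end{align*}
where $\Tinner(w)$ is the expected time of one inner RGMM call from $w$. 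Substituting and swapping the order of summation, the expected cost of one outer call collapses to
\begin{align*}
\widetilde{O}(\bar{d}(G)) + \widetilde{O}\!\left(\frac{1}{n}\sum_{w \in V(H)} \Tinner(w)\,\deg_H(w)\right).
\end{align*}

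To bound the second term by $\widetilde{O}(\bar{d}(G)\sqrt{n})$, the plan is to combine two ingredients: (a) by \Cref{lem:sparsification-maximal}, $\deg_H(w) \le \sqrt{n}$ for every $w$ with high probability; and (b) \Cref{prop:outdegree-bound} applied to the inner oracle from a uniformly random vertex of $V(H)$ gives $\sum_w \Tinner(w) = |V(H)| \cdot \widetilde{O}(\bar{d}(G)) = \widetilde{O}(n\bar{d}(G))$. Together they imply $\sum_w \Tinner(w)\deg_H(w) \le \sqrt{n} \cdot \widetilde{O}(n\bar{d}(G))$, so the second term is $\widetilde{O}(\bar{d}(G)\sqrt{n})$. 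Summing parts~(i) and~(ii) and multiplying by $r = \widetilde{O}(1)$ yields the claimed $\widetilde{O}(\bar{d}(G)\sqrt{n})$ bound.

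The main obstacle is that the inner oracle is invoked on adversarially chosen vertices $w$ (rather than uniformly random ones), so $\Tinner(w)$ may for specific $w$ be far larger than its average $\widetilde{O}(\bar{d}(G))$, defeating a naive per-call bound. The saving grace is that the \emph{weighted} sum $\sum_w \Tinner(w)\deg_H(w)$ is nonetheless controlled by combining the global averaging bound $\sum_w \Tinner(w) \le \widetilde{O}(n\bar{d}(G))$ from \Cref{prop:outdegree-bound} with the max-degree guarantee $\Delta(H) \le \sqrt{n}$ from the sparsification step---exactly the quantitative realization of Challenge~(4) from \Cref{sec:overview}.
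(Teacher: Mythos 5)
Your proposal is correct and follows essentially the same route as the paper: bound the per-call cost using \Cref{prop:outdegree-bound}, decompose $\Touter(v)$ into the rejection-sampling overhead plus the inner-oracle cost, swap the order of summation to get $\frac{1}{n}\sum_w \costinner(w)\deg_{G''}(w)$, and then split the degree factor (bounded by $\sqrt{n}$ via \Cref{lem:sparsification-maximal}) from the weighted sum $\sum_w \costinner(w) = \widetilde{O}(n\bar{d}(G))$ obtained from a second application of \Cref{prop:outdegree-bound}. The only step you glossed over is asserting $|V(G'_1)| = \Theta(n)$ without justification: this requires the assumption $|V\setminus V(M)| = \Omega(n)$, which the paper explicitly invokes (without loss of generality, since otherwise Case~2 alone already yields the approximation guarantee and \Cref{alg:first-case} need not be run).
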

\begin{proof}
    Without loss of generality, we assume that $|V \setminus V(M)| = \Omega(n)$; otherwise, \Cref{alg:sublinear-first} does not need to execute \Cref{alg:first-case}, as the estimate from \Cref{alg:second-case} suffices.

    The proof consists of two parts: first we show that line 6 of the algorithm can be implemented in $\wt{O}(\bar{d}(G))$ expected time, and then we prove that line 8 of the algorithm can be implemented in $\wt{O}(\bar{d}(G)\cdot \sqrt{n})$ expected time.

    For the first part, let $G'' = G[V \setminus V(M)]$ and let $v$ be a vertex in $G''$. In the adjacency list of $v$ in the original graph, which contains $\deg_G(v)$ elements, only $\deg_{G''}(v)$ of them are neighbors in $G''$. Thus, to find a neighbor in $G''$, we need to randomly sample $T(v) = \deg_G(v) / \deg_{G''}(v)$ elements from $v$'s adjacency list in expectation. Consequently, using \Cref{prop:outdegree-bound}, the expected time required to execute the random greedy maximal matching oracle for a randomly chosen vertex and permutation in $G''$ is
    \begin{align*}
        &\sum_{v \in V(G'')} 
    \wt{O}\left(\frac{T(v) \cdot \deg_{G''}(v)}{|V(G'')|}\right) = \sum_{v \in V(G'')} 
    \wt{O}\left(\frac{(\deg_G(v)/\deg_{G''}(v)) \cdot \deg_{G''}(v)}{|V(G'')|}\right)\\
    &\qquad \qquad = \sum_{v \in V(G'')} 
    \wt{O}\left(\frac{\deg_G(v)}{| V(G'')|}\right)
    = \wt{O}\left(\frac{|E(G)|}{|V(G'')|}\right)
    = \wt{O}\left(\frac{|E(G)|}{|V(G)|}\right)
     = \widetilde{O}(\bar{d}(G)) .
    \end{align*}

    In the algorithm, we choose a permutation $\pi$ and run the random greedy maximal matching for $r = \wt{O}(1)$ randomly chosen vertices. By linearity of expectation, the expected running time of the first part is $\widetilde{O}(\bar{d}(G))$.

    We prove the second part in a few steps. As a first step, for simplicity, assume that we have access to the adjacency list of $G'_1$ and there is no need to run nested oracles. Then, using \Cref{prop:rgmm}, the expected running time of line 8 is bounded by $\wt{O}(\bar{d}(G'_1))$ for $r = \wt{O}(1)$ random permutations and random vertices. 
    
    Now, suppose that we have access to the adjacency list of $G'' = G[V \setminus V(M)]$ instead of $G'_1$. By \cref{prop:outdegree-bound},
    the total expected runtime of the outer ($b$-matching) oracle is
    $$\wt{O}\left( \sum_{v \in G'_1} \Touter(v) \cdot \deg_{G'_1}(v) / |V(G'_1)|\right),$$
    where $\Touter(v)$ is the expected time needed to find a random neighbor of $v$ in $G_1'$.
    To find such a neighbor, we sample neighbors $w$ of $v$ in $G''$
    and check whether $w$ is matched in $M'$ using the inner oracle.
    The expected number of these checks until a vertex $w$ with $(v,w) \in E(G_1')$ is found is $\wt{O}(\deg_{G''}(v)/\deg_{G'_1}(v))$.
    The expected cost of each check is $\E_{w : (v,w) \in E(G'')}[\costinner(w)]$,
    where $\costinner(w)$ is the runtime of invoking the inner oracle to check if $w$ is matched in $M'$.
    Putting this together, the expected total runtime is
    \begin{align*}
        \wt{O}\left(\frac{\sum_{v \in G'_1} \Touter(v) \cdot \deg_{G'_1}(v)}{|V(G'_1)|} \right)
        &=
        \wt{O}\left(\frac{1}{|V(G'_1)|} \sum_{v \in G'_1} \frac{\deg_{G''}(v)}{\deg_{G'_1}(v)} \cdot \frac{\sum_{w : (v,w) \in E(G'')}\costinner(w)}{\deg_{G''}(v)} \cdot \deg_{G'_1}(v)\right)
        \\
        &=
        \wt{O}\left(\frac{1}{|V(G'_1)|} \sum_{w \in G''} \costinner(w) \cdot \deg_{G''}(w)\right)
        \\
        &=
        \wt{O}\left(\sum_{w \in G''} \frac{\costinner(w)}{|V(G'')|} \cdot \deg_{G''}(w)\right)
        \\
        &\stackrel{\Cref{lem:sparsification-maximal}}{=}
        \wt{O}\left(\sqrt{n} \cdot \E_{w \in G''} [\costinner(w)]\right)
        \\
        &\stackrel{\Cref{prop:outdegree-bound}}{=}
        \wt{O}\left(\sqrt{n} \cdot \sum_{v \in G''} \deg_{G''}(v) \cdot \Tinner(v) \cdot \frac{1}{|V(G'')|}\right)
    \end{align*}
    where $\Tinner(v)$ is the expected time needed to find a random neighbor of $v$ in $G''$.
    Under our assumption of having access to the adjacency list of $G''$, $\Tinner(v) = O(1)$
    and thus we can finish bounding with $\wt{O}(\sqrt{n} \cdot \bar{d}(G'')) = \wt{O}(\sqrt{n} \cdot \bar{d}(G))$.

    Now we lift the assumption of having access to the adjacency list of $G''$.
    This means that, similarly as in the first part, we need to sample $\Tinner(v) = \deg_G(v) / \deg_{G''}(v)$ vertices from the adjacency list of $v$ in expectation (checking if a vertex is matched in $M$ is done in $O(1)$ time).
    Then the total bound becomes
    \begin{align*}
    \wt{O}\left(\sqrt{n} \cdot \sum_{v \in G''} \deg_{G''}(v) \cdot \frac{\deg_G(v)}{\deg_{G''}(v)} \cdot \frac{1}{|V(G'')|}\right)
    = \wt{O}\left(\sqrt{n} \cdot \frac{|E(G)|}{|V(G'')|}\right)
    = \widetilde{O}(\sqrt{n} \cdot \bar{d}(G)) .
    \end{align*}
    Also $\Touter(v)$ increases by $\deg_G(v) / \deg_{G_1'}(v)$,
    which increases the total runtime only by
    \begin{align*}
        \wt{O}\left(\frac{1}{|V(G'_1)|} \sum_{v \in G'_1} \frac{\deg_G(v)}{\deg_{G'_1}(v)} \cdot \deg_{G'_1}(v) \right) = \wt{O}(\bar{d}(G)).
    \end{align*}
\end{proof}

\begin{lemma}\label{lem:second-alg-time}
    \Cref{alg:second-case} runs in $\widetilde{O}(\bar{d}(G))$ time in expectation.
\end{lemma}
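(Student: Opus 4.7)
The plan is to apply \Cref{prop:outdegree-bound} to a single RGMM oracle invocation on $G'_2$. The key observation making this case much easier than the analysis of \Cref{alg:first-case} is that the matching $M$ has been explicitly constructed by \Cref{alg:sparsification}, so membership in $V(M)$ (and hence classification of a vertex as being in $A$ or $B$) can be tested in $O(1)$ time. No nested oracle is required, so the entire analysis reduces to bounding the cost of sampling a random neighbor in $G'_2$ from the adjacency list of $G$, and then plugging this into the RGMM runtime bound.

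First, I would bound $T(v)$, the expected time to produce a uniformly random neighbor of a vertex of $G'_2$. Since $G'_2$ is obtained from $G_2 = G[V(M), V \setminus V(M)]$ by duplicating each vertex of $A$ into $k$ copies and each vertex of $B$ into $kb$ copies, a uniform random neighbor in $G'_2$ can be generated by sampling a uniform random neighbor in $G_2$ and then picking a uniformly random copy on the opposite side. The former is done by rejection sampling from the adjacency list of $v$ in $G$, keeping a sampled neighbor if and only if it lies on the opposite bipartition class (an $O(1)$ check thanks to explicit access to $M$). This yields $T(v) = O(\deg_G(v)/\deg_{G_2}(v))$ in expectation.

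Next, I would substitute into \Cref{prop:outdegree-bound}. A copy $v^i$ with $v \in A$ has $\deg_{G'_2}(v^i) = kb \cdot \deg_{G_2}(v)$, and $|V(G'_2)| = k|V(M)| + kb|V \setminus V(M)| = \Theta(kn)$. Its contribution to the sum is therefore $\wt{O}\bigl((\deg_G(v)/\deg_{G_2}(v)) \cdot kb\deg_{G_2}(v) / (kn)\bigr) = \wt{O}(\deg_G(v)/n)$, where the $\deg_{G_2}(v)$ factors cancel. Summing over the $k$ copies of each $v \in A$ and the $kb$ copies of each $v \in B$ (absorbing the constants $k$ and $b$ into $\wt{O}$), and then over all vertices, gives $\wt{O}(|E(G)|/n) = \wt{O}(\bar{d}(G))$. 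Repeating the procedure for $r = \wt{O}(1)$ random starting vertices multiplies the total only by a polylogarithmic factor, preserving the $\wt{O}(\bar{d}(G))$ bound by linearity of expectation.

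The only real obstacle is the bookkeeping: one must carefully track how vertex duplication affects both the degree sequence of $G'_2$ and $|V(G'_2)|$, and verify that the constant $k$ (depending only on $\epsilon$) is safely absorbed into $\wt{O}$. Otherwise, the argument is strictly simpler than the corresponding part of \cref{lem:first-alg-time}, since the explicit construction of $M$ eliminates any need for an inner oracle call.
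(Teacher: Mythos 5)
Your proposal is correct and takes essentially the same approach as the paper: both apply \Cref{prop:outdegree-bound} with the rejection-sampling bound $T(v) = O(\deg_G(v)/\deg_{G_2}(v))$, using the $O(1)$ membership check in $V(M)$, and arrive at $\sum_v \wt{O}(\deg_G(v)/n) = \wt{O}(\bar d(G))$. The one minor difference is that you explicitly track how the $k$- and $kb$-fold duplication affects both the degree sequence and $|V(G'_2)|$ and verify the constants cancel; the paper glosses over this by working directly in $G_2 = G[V(M), V\setminus V(M)]$ and letting the constant-factor blow-up be absorbed implicitly, so your bookkeeping is just a more pedantic version of the same computation.
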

\begin{proof}
    It is enough to show that it takes $\wt{O}(\bar{d}(G))$ time to run the algorithm of \Cref{prop:rgmm} (the $b$-matching oracle) for each sampled vertex and permutation, as we have $r = \wt{O}(1)$. 
    We repeat the same arguments as in the proof of~\Cref{lem:first-alg-time}.
    
    Let $v$ be a vertex in the graph $G'_2 = G[V(M), V\setminus V(M)]$. Note that in the adjacency list of $v$ in the original graph, which contains $\deg_G(v)$ elements, only $\deg_{G'_2}(v)$ of the elements are neighbors in $G'_2$. Thus, we need to randomly sample $T(v) = \deg_G(v) / \deg_{G'_2}(v)$ elements of the adjacency list of $v$ to find a neighbor in $G'_2$.
    Recall that we can check whether a vertex is matched in $M$ in $O(1)$ time.
    Therefore, using \Cref{prop:outdegree-bound}, the expected time needed to run the random greedy maximal matching for a random vertex and permutation in $G'_2$ is
    \begin{align*}
        &\sum_{v \in V(G'_2)} 
    \wt{O}\left(\frac{T(v) \cdot \deg_{G'_2}(v)}{|V(G'_2)|}\right) = \sum_{v \in V} 
    \wt{O}\left(\frac{(\deg_G(v)/\deg_{G'_2}(v)) \cdot \deg_{G'_2}(v)}{n}\right)\\
    &\qquad \qquad = \sum_{v \in V} 
    \wt{O}\left(\frac{\deg_G(v)}{n}\right)
    = \wt{O}\left(\frac{|E(G)|}{n}\right)
    = \widetilde{O}(\bar{d}(G)).
    \end{align*}%
\end{proof}

\begin{lemma}\label{lem:final-time-1}
    \Cref{alg:sublinear-first} runs in $\widetilde{O}(n\sqrt{n})$ time with high probability.
\end{lemma}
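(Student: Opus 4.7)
The plan is to decompose the runtime of \Cref{alg:sublinear-first} into three contributions, bound each using the machinery already established in this section, and then amplify an expected-time bound into a high-probability bound via a standard parallel-repetition argument.

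First, I would observe that \Cref{alg:sublinear-first} consists of exactly three substantive computational steps: the sparsification in Line~1, the invocation of \Cref{alg:first-case} in Lines~2--3, and the invocation of \Cref{alg:second-case} in Lines~4--5. The remaining arithmetic (computing $\mu_1$, $\mu_2$, and their maximum) is $O(1)$ and can be ignored. For the sparsification, \Cref{clm:sparsification-time} already gives a deterministic $\widetilde{O}(n\sqrt{n})$ time bound. For the other two steps I would apply \Cref{lem:first-alg-time} and \Cref{lem:second-alg-time}, which give expected running times of $\widetilde{O}(\bar{d}(G)\cdot \sqrt{n})$ and $\widetilde{O}(\bar{d}(G))$ respectively. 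Since $\bar{d}(G)\le n-1$ trivially, both of these are at most $\widetilde{O}(n\sqrt{n})$ in expectation, so by linearity of expectation the total expected running time of Lines~2--5 is $\widetilde{O}(n\sqrt{n})$.

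The main obstacle, and the only nontrivial step, is converting the expectation bound into a high-probability bound, since \Cref{lem:first-alg-time} and \Cref{lem:second-alg-time} give bounds only in expectation (they rely on \Cref{prop:outdegree-bound}, which is itself in expectation). A direct application of Markov's inequality loses a polynomial factor and is therefore insufficient. Instead, I would use the textbook technique of running $\Theta(\log n)$ independent copies of Lines~2--5 in parallel, each equipped with fresh random bits, and terminating as soon as any single copy completes. Let $T = C \cdot n\sqrt{n}$ denote a value dominating the expected running time of a single copy (absorbing polylogarithmic factors into $C$). By Markov's inequality, any individual copy completes within time $2T$ with probability at least $1/2$, and by independence the probability that none of the $\Theta(\log n)$ copies completes within $2T$ is at most $(1/2)^{\Theta(\log n)}=1/\mathrm{poly}(n)$.

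Hence, with high probability, the parallel execution finishes within $2T=\widetilde{O}(n\sqrt{n})$ wall-clock time, and the aggregate work over all copies is $O(T\log n)=\widetilde{O}(n\sqrt{n})$. Since \Cref{lem:approx-guarantee} guarantees that any single completed copy yields a valid estimate with high probability, a union bound (over the failure event for parallel repetition and the failure event of the chosen copy's estimate) preserves the high-probability guarantee. Combined with the deterministic $\widetilde{O}(n\sqrt{n})$ cost of Line~1 (sparsification), this gives the claimed overall runtime of $\widetilde{O}(n\sqrt{n})$ with high probability.
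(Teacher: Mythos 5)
Your proposal is correct and follows essentially the same route as the paper: decompose into sparsification (\Cref{clm:sparsification-time}) plus the two nested-oracle subroutines (\Cref{lem:first-alg-time}, \Cref{lem:second-alg-time}), then amplify the resulting expected-time bound to a high-probability bound by running $O(\log n)$ independent copies in parallel, halting at the first to finish, and applying Markov's inequality to each copy. The only cosmetic difference is that you spell out $\bar d(G)\le n-1$ and the final union bound with the accuracy guarantee, which the paper leaves implicit.
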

\begin{proof}
    By \Cref{clm:sparsification-time}, the sparsification step takes $\wt{O}(n\sqrt{n})$ time. Also, by \Cref{lem:first-alg-time,lem:second-alg-time}, \Cref{alg:first-case,alg:second-case} run in $\widetilde{O}( \bar{d}(G) \cdot \sqrt{n})$ and $\widetilde{O}(\bar{d}(G))$ expected time, respectively. To establish a high probability bound on the time complexity, we execute $O(\log n)$ instances of the algorithm in parallel and halt as soon as the first one completes. By applying Markov’s inequality, we deduce that each individual instance terminates within $\widetilde{O}(n\sqrt{n})$ time with constant probability. As a result, with high probability, at least one of these instances finishes within $\widetilde{O}(n\sqrt{n})$ time. This concludes the proof.
\end{proof}

Now we are ready to prove the final theorem of this section.

\begin{theorem}
    There exists an algorithm that, given access to the adjacency list of a bipartite graph, estimates the size of the maximum matching with a multiplicative-additive approximation factor of $(0.5109, o(n))$ and runs in $\widetilde{O}(n\sqrt{n})$ time with high probability.
\end{theorem}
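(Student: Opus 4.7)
The plan is to observe that essentially all the work has been done by the preceding lemmas, so the proof amounts to instantiating \Cref{alg:sublinear-first} and combining \Cref{lem:approx-guarantee} with \Cref{lem:final-time-1} via a union bound. I would begin by fixing the parameters: $b = 1 + \sqrt{2}$, $k$ an integer larger than $1/(b\epsilon^3)$ with $\epsilon$ a sufficiently small constant so that the constant extracted in \Cref{lem:apx051bipartite} stays above $0.5109$, and the sparsification constant $c = 2\sqrt{n}\log n$. The algorithm itself is just \Cref{alg:sublinear-first}; no new procedure is needed.

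For the approximation guarantee, I would simply apply \Cref{lem:approx-guarantee}, which says that with high probability the output $\max(\mu_1,\mu_2)$ satisfies $0.5109 \cdot \mu(G) - o(n) \le \max(\mu_1,\mu_2) \le \mu(G)$. This is exactly the claimed multiplicative-additive guarantee $(0.5109, o(n))$. The chain underlying this lemma is: \Cref{lem:sparsification-maximal} ensures the sparsified subgraph $G[V\setminus V(M)]$ has degree at most $\sqrt{n}$ whp; \Cref{lem:fractional_matching} certifies that $\mu_1$ and $\mu_2$ are upper bounded by $\mu(G)$ via exhibiting fractional matchings and using bipartite integrality; and \Cref{lem:apx051bipartite} furnishes the lower bound by a weighted case analysis, invoking \Cref{lem:585} on the two subgraphs $G[V\setminus V(M)]$ (for case~1) and $G[V(M)] \cup G[V(M), V\setminus V(M)]$ (for case~2). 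The estimators $\mu_{M'}, \mu_{B_1}, \mu_{B_2}$ concentrate around their expectations by a Chernoff bound with the chosen $r = 6\log^3 n$ samples, which is exactly what is needed to transfer the expectations in \Cref{lem:apx051bipartite} to the empirical quantities up to an $o(n)$ additive loss.

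For the running time, I would invoke \Cref{lem:final-time-1}, which already packages together \Cref{clm:sparsification-time} ($\widetilde O(n\sqrt n)$ for sparsification), \Cref{lem:first-alg-time} ($\widetilde O(\bar d(G)\sqrt n)$ for \Cref{alg:first-case}), and \Cref{lem:second-alg-time} ($\widetilde O(\bar d(G))$ for \Cref{alg:second-case}). Since $\bar d(G) \le n$, each call costs $\widetilde O(n\sqrt n)$ in expectation, and the standard "run $O(\log n)$ copies in parallel, return the first one to finish" trick, combined with Markov's inequality, converts the expected bound into a high-probability bound without affecting the approximation guarantee (the parallel copies all produce valid estimates).

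The mildly delicate step, and arguably the only one that requires attention, is ensuring the two high-probability events coexist. I would take a union bound over (i) the event that sparsification produces a subgraph of maximum degree $\le \sqrt n$ (\Cref{lem:sparsification-maximal}), (ii) the Chernoff concentration of $X, Y, Z$ used in \Cref{lem:approx-guarantee}, and (iii) the runtime event from the parallel-copies trick. Each fails with probability at most $1/\poly(n)$, so all three hold simultaneously with high probability, at which point the returned $\max(\mu_1,\mu_2)$ is both a $(0.5109, o(n))$-approximation and produced in $\widetilde O(n\sqrt n)$ time. There is no genuine obstacle beyond this bookkeeping; all the technical content was already absorbed into the preceding lemmas.
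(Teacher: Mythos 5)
Your proposal matches the paper's own proof, which likewise just cites \Cref{lem:approx-guarantee} for the $(0.5109, o(n))$ guarantee and \Cref{lem:final-time-1} for the $\widetilde{O}(n\sqrt{n})$ high-probability running time; your additional unpacking of the dependency chain and the explicit union bound over the three events are correct and consistent with what those lemmas already establish.
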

\begin{proof}
    By \Cref{lem:approx-guarantee}, \Cref{alg:sublinear-first} achieves multiplicative-additive approximation of $(0.5109, o(n))$. Moreover, the running time of \Cref{alg:sublinear-first} is $\widetilde{O}(n\sqrt{n})$ with high probability by \Cref{lem:final-time-1}.
\end{proof}




\section{Algorithm for General Graphs}\label{sec:general}

The following is an analogue of \cref{lem:apx051bipartite} for general graphs.
\begin{lemma} \label{lem:apx051general}
    In a general graph $G$,
    let $M$, $M'$, $B_1$ and $B_2$ be as in the description of \Cref{alg:sublinear-first}. Then
    \[(1-\epsilon)\max \left[|M| + \E[\mu(M' \cup B_1)], \E[\mu(M \cup B_2)] \right] \geq 0.5109  \cdot\mu(G) .\]
\end{lemma}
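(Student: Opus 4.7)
The plan is to mirror the structure of the bipartite proof of \cref{lem:apx051bipartite} almost verbatim, substituting the first (bipartite) part of \cref{lem:585} with its second (general) part at the two places where we lower-bound the matching sizes produced in case 1 and case 2. Concretely, I would fix a maximum matching $M^*$ and partition its edges into the same five classes $M_1^*, M_1^{'*}, M_2^*, M_2^{'*}, M_2^{''*}$ as in \cref{fig2}, yielding the same two identities $\mu(G) = \mo + \mop + \mt + \mtp + \mtpp$ and $|M| = \mt + \tfrac12\mo + \tfrac12\mtp$ (which follow from maximality of $M \cup M'$ and simple counting, independent of bipartiteness).

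The crucial structural observation is that $M$ and $M' \cup B_1$ live on disjoint vertex sets: $M'$ is a matching in $G[V \setminus V(M)]$ and $B_1$ is a $b$-matching between $V(M')$ and $V \setminus V(M) \setminus V(M')$, so both avoid $V(M)$ entirely. Hence $\mu(M \cup M' \cup B_1) = |M| + \mu(M' \cup B_1)$, which is exactly the quantity estimated by the algorithm in case~1. This additivity is what removes the need for the fractional-matching / bipartite integrality argument used in \cref{lem:apx051bipartite}, and allows us to reason about integral matching sizes directly even in general graphs.

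Next I would invoke the second bullet of \cref{lem:585} (the general-graph version, with random greedy $b$-matching) on the two natural subgraphs. For case~1, take $G' = G[V \setminus V(M)]$, in which $M'$ is maximal and $B_1$ is a random greedy maximal $b$-matching of the prescribed capacities; since $M_1^{'*} \cup M_2^{''*}$ is a matching in $G'$, we get
\[
\E[\mu(M' \cup B_1)] \ge (2-\sqrt{2}-\epsilon)\, \mu(G[V \setminus V(M)]) \ge (2-\sqrt{2}-\epsilon)(\mop + \mtpp).
\]
For case~2, take $G' = G[V(M)] \cup G[V(M), V \setminus V(M)]$, in which $M$ is maximal and $B_2$ is the random greedy $b$-matching; since $M_2^* \cup M_2^{'*} \cup M_1^*$ is a matching in $G'$, we get
\[
\E[\mu(M \cup B_2)] \ge (2-\sqrt{2}-\epsilon)\, \mu(G') \ge (2-\sqrt{2}-\epsilon)(\mt + \mtp + \mo).
\]

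Combining these with $|M| = \mt + \tfrac12\mo + \tfrac12\mtp$ gives the two lower bounds inside the $\max$ in the exact same form as equations (\ref{eqcase1}) and (\ref{eqcase2}) from the bipartite proof. The final weighted-average argument with weights $\beta = \tfrac{12+2\sqrt{2}}{17}$ and $1-\beta$, which balanced $\tfrac{\beta}{2}+(2-\sqrt{2})(1-\beta) = (2-\sqrt{2})\beta$ and yielded the ratio $\gamma = \tfrac{4(5-2\sqrt{2})}{17} > 0.5109$, goes through unchanged, and the leading $(1-\epsilon)$ factor on the left-hand side comfortably absorbs the $O(\epsilon)$ slack coming from the $\epsilon$ in \cref{lem:585}. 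I do not expect a genuine obstacle here: all the new technical content is already packaged in the second statement of \cref{lem:585}, and the only subtlety to flag explicitly is the disjointness observation that turns $|M| + \E[\mu(M' \cup B_1)]$ into $\E[\mu(M \cup M' \cup B_1)]$ so that the case~1 quantity can be lower-bounded by the partition summands.
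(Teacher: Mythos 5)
Your proof is correct and follows essentially the same route as the paper: the paper's own proof of \cref{lem:apx051general} is stated as ``proceed exactly as in \cref{lem:apx051bipartite}, but invoke the second bullet of \cref{lem:585} instead of the first, and fold the $(1-\epsilon)$ prefactor into the $O(\epsilon)$ slack in $\gamma$,'' which is precisely what you do, with the same five-class partition of $M^*$, the same two invocations of \cref{lem:585} on $G[V\setminus V(M)]$ and $G[V(M)]\cup G[V(M),V\setminus V(M)]$, and the same weighted-average closing. Your explicit note that $\mu(M\cup M'\cup B_1)=|M|+\mu(M'\cup B_1)$ by vertex disjointness is a sound clarifying observation that the paper leaves implicit, though the argument does not actually rely on it (both cases are lower-bounded directly via the partition and \cref{lem:585}).
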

\begin{proof}
    The proof proceeds as that of \cref{lem:apx051bipartite}, with the following modifications:
    \begin{itemize}
        \item we invoke the second rather than the first part of \cref{lem:585} to obtain analogues of inequalities \eqref{eqcase1} and \eqref{eqcase2},
        i.e., instead of lower-bounding $|M| + (1 - \frac{1}{b}) |M'| + \frac{1}{kb} |B_1|$, we lower-bound $|M| + \E[\mu(M' \cup B_1)]$,
        and instead of lower-bounding
        $(1 - \frac{1}{b}) |M| + \frac{1}{kb} |B_2|$,
        we lower-bound $\E[\mu(M \cup B_2)]$,
        \item the $(1-\epsilon)$ factor on the left-hand side of the statement can be folded into the $O(\epsilon)$ term in $\gamma$ at the end of the proof, for small enough $\epsilon$.
    \end{itemize}
\end{proof}

In this section, we show how to extend our algorithm to work for general graphs. The main difference between bipartite and general graphs is that the estimate based on the sizes of $|M|$, $|M'|$, $|B_1|$, and $|B_2|$ is insufficient to achieve a 0.5109 approximation guarantee. In \Cref{lem:apx051general}, we show that we can achieve this approximation ratio by estimating $\mu(M \cup B_2)$ and $\mu(M' \cup B_1)$. More formally, to produce $\mu_1$ and $\mu_2$ in \Cref{alg:sublinear-first}, we use $|M| + \mu(M' \cup B_1)$ and $\mu(M \cup B_2)$, respectively. Also, \Cref{alg:first-case} and \Cref{alg:second-case} output $\mu(M' \cup B_1)$ and $\mu(M \cup B_2)$, respectively.

In both \Cref{alg:first-case} and \Cref{alg:second-case} we have access to oracles that can return whether a vertex is matched in $M'$, $B_1$, or $B_2$ for a fixed permutation $\pi$. These oracles can also be used to return the edge of the matching if the vertex is matched, which is a corollary of \Cref{prop:rgmm} since the algorithm of \Cref{prop:rgmm} can be also used to return the edge of the matching.

\begin{lemma}\label{lem:edge-oracle-1}
    For a vertex $v$, there exists an algorithm that returns the edges of $v$ in $M'$ and $B_1$ in $\wt{O}(\bar{d}(G) \cdot \sqrt{n})$ expected time.
\end{lemma}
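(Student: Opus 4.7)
The plan is to extend each of the $M'$ and $B_1$ oracles so that, when queried at the fixed vertex $v$, they additionally return the particular edge(s) of the respective matching incident to $v$, and then to bound the expected cost of each such invocation. The edge-return part is essentially free: the procedure underlying \cref{prop:rgmm} already computes $v$'s match status by simulating the random greedy process on the query tree rooted at $v$, declaring an edge $(v,u)$ to be in the matching as soon as it is the lowest-rank incident edge (according to $\pi$) at $v$ that is not blocked by an earlier matched edge at either endpoint. Returning this edge as a byproduct requires no additional work, and the analogous statement holds for the $b$-matching oracle, which returns up to $\lceil kb\rceil = O(1)$ matched copies of $v$.

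For the runtime, the plan is to adapt the analysis of \cref{lem:first-alg-time} to a single fixed starting vertex. Running the $M'$ oracle at $v$ on $G'':=G[V\setminus V(M)]$ performs $\wt{O}(\bar d(G''))$ edge queries in expectation; each such query requires $\deg_G(u)/\deg_{G''}(u)$ samples from $G$'s adjacency list to land in $G''$, which, summed in the style of \cref{lem:first-alg-time}, yields $\wt{O}(\bar d(G))$. Running the $B_1$ oracle at $v$ uses nested calls: an outer $b$-matching oracle on $G_1'$ whose neighbor queries trigger an inner $M'$ oracle. Tracking the same three layers of translation (sampling a $G_1'$-neighbor via $G''$-neighbors via $G$-neighbors, each check costing an $M'$-oracle call) and using $\Delta(G'')\le\sqrt{n}$ from \cref{lem:sparsification-maximal}, the total expected cost becomes $\wt{O}(\bar d(G)\cdot\sqrt{n})$.

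The main technical obstacle is that the runtime guarantees of \cref{prop:rgmm}, \cref{prop:outdegree-bound-1}, and \cref{prop:outdegree-bound} are stated in expectation over a \emph{random} starting vertex together with a random permutation, whereas here the starting vertex $v$ is arbitrary. This should be handled either by revisiting the underlying analysis of \cite{Behnezhad21} to obtain a per-vertex expected-runtime bound (the expected query-tree size starting at any fixed edge is $\wt{O}(\bar d)$ over random $\pi$) or by a Markov amplification in the spirit of \cref{lem:final-time-1}: run $O(\log n)$ independent instances in parallel with fresh permutations and return the first one to complete.
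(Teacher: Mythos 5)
Your first two paragraphs match the paper's intended argument exactly: the \RGMM\ procedure of \cref{prop:rgmm} already locates the matched edge as a byproduct of determining match status, and the runtime transfers directly from the accounting in \cref{lem:first-alg-time}. That is all the paper's own (one-line) proof says.

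Where you depart from the paper is in the third paragraph, and there the departure is problematic. You read ``for a vertex $v$'' as an arbitrary worst-case $v$ with expectation over $\pi$ only. But the paper's proof and, more tellingly, the only place this lemma is invoked (\cref{lem:matching-sparse-levi}) use it as an \emph{average over random $v$ and random $\pi$}: in that proof the quantity $T$ enters only through $\sum_\pi \sum_v \E[F(v,\pi)]/(n\cdot|E(G)|!) \le \wt{O}(T)$, precisely the averaged bound, and the bias from the LCA's non-uniform queries is handled separately there via the bounded-radius argument. So the ``obstacle'' you flag is not one the paper needs to clear, and its own proof is complete under the intended reading. More importantly, your two proposed repairs would not close the gap if it did exist: (i) the expected query-tree size for a \emph{fixed} $v$ over random $\pi$ is not $\wt{O}(\bar d)$ in general --- \cref{prop:outdegree-bound-1} gives $Q(v) = \wt{O}(\deg_G(v)/n)$ per vertex only when the start is random, and a fixed high-degree start can be far costlier --- and (ii) Markov amplification with fresh permutations is not available here, because each fresh $\pi$ produces a \emph{different} matching $M'$ and $b$-matching $B_1$, whereas the downstream LCA in \cref{lem:matching-sparse-levi} requires all its oracle queries to be answered consistently with respect to a single fixed $H = M' \cup B_1$. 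The Markov trick in \cref{lem:final-time-1} works because there each of the $O(\log n)$ copies is an entire independent run of the estimator, not a per-vertex subroutine whose answers must cohere.
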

\begin{proof}
    The proof follows from \Cref{lem:first-alg-time} and the fact that the algorithm in \Cref{prop:rgmm} can return the edge of the matched vertex in the same running time.
\end{proof}

\begin{lemma}\label{lem:edge-oracle-2}
    For a vertex $v$, there exists an algorithm that returns the edges of $v$ in $M$ and $B_2$ in $\wt{O}(\bar{d}(G))$ expected time.
\end{lemma}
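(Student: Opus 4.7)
The plan is to mirror the proof of \Cref{lem:edge-oracle-1}, but observe that here no nested oracle is needed, which saves the $\sqrt{n}$ factor. Recall that the matching $M$ is constructed explicitly in the sparsification step (\Cref{alg:sparsification}) and can be stored as, for example, an array indexed by vertex. Hence we can retrieve the edge of $v$ in $M$ (or determine that $v \notin V(M)$) in $O(1)$ time, without any oracle call.

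For the edge of $v$ in $B_2$, I would invoke the $b$-matching oracle from \Cref{alg:second-case}, namely the algorithm of \Cref{prop:rgmm} applied to the graph $G'_2 = G[V(M), V \setminus V(M)]$ (with capacities duplicated to reduce $b$-matching to matching) for the fixed permutation $\pi$ used in \Cref{alg:second-case}. As noted just before \Cref{lem:edge-oracle-1}, the algorithm of \Cref{prop:rgmm} not only decides whether a queried vertex is matched but also, in the same asymptotic running time, returns its matching edge; this is a standard property of the Behnezhad oracle since once the greedy process reaches the first edge incident to $v$ that is added to the matching, that edge is exactly the one we output.

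The running time of a single invocation of this oracle on $G'_2$ is bounded by the same analysis as in \Cref{lem:second-alg-time}: checking membership in $V(M)$ is $O(1)$, so the only cost of simulating adjacency in $G'_2$ given adjacency-list access to $G$ is the factor $T(v) = \deg_G(v)/\deg_{G'_2}(v)$ of rejection sampling to find a neighbor inside $G'_2$. Plugging this into \Cref{prop:outdegree-bound} telescopes to $\wt{O}(|E(G)|/n) = \wt{O}(\bar{d}(G))$, exactly as in \Cref{lem:second-alg-time}. Combining the $O(1)$ lookup for $M$ with the $\wt{O}(\bar{d}(G))$ oracle call for $B_2$ gives the claimed total.

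The main (minor) obstacle is really just noting that \Cref{lem:second-alg-time} was stated for the \emph{matched-or-not} query starting from a random vertex and random permutation, whereas here we want a specific vertex $v$ under the fixed permutation $\pi$. However, the per-query expected runtime bound from \Cref{prop:outdegree-bound} is over the randomness of the permutation and starting vertex; since $v$ in our setting is arbitrary but the permutation is still drawn at random (and is the same $\pi$ used throughout the algorithm), the expected runtime bound continues to hold when averaged over $\pi$, which is the standard interpretation. Hence the lemma follows.
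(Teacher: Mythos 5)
Your proof takes essentially the same route as the paper: the edge of $v$ in $M$ is an $O(1)$ array lookup because $M$ is stored explicitly after \Cref{alg:sparsification}, while the edge of $v$ in $B_2$ is obtained by invoking the RGMM oracle of \Cref{prop:rgmm} on the blown-up graph $G_2'$, with the running time inherited from \Cref{lem:second-alg-time} and the standard fact that the Behnezhad oracle returns the matching edge in the same asymptotic time. Up to that point you match the paper's argument.

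The last paragraph introduces an unjustified step that is not present in the paper. You correctly notice that \Cref{prop:rgmm} and hence \Cref{lem:second-alg-time} bound the \emph{expected} time when both $v$ and $\pi$ are random, but you then claim that for an \emph{arbitrary fixed} $v$ the bound ``continues to hold when averaged over $\pi$.'' That inference does not follow: a bound on $\E_{v,\pi}[F(v,\pi)]$ says nothing about $\E_{\pi}[F(v_0,\pi)]$ for a worst-case $v_0$ — for instance, a single high-degree vertex in an otherwise sparse graph can have a per-query expected cost an $\Omega(n/\bar{d})$ factor larger than the average. The intended reading of the lemma (and the way it is actually consumed in \Cref{lem:matching-sparse-levi}) is that $v$ is also random; the potential bias introduced by the LCA's non-uniform query pattern is then handled explicitly in the proof of \Cref{lem:matching-sparse-levi} via the bounded-degree/locality argument, not by strengthening this lemma to worst-case $v$. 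So your raising of the concern is well placed, but its resolution should be ``interpret $v$ as random here and defer bias control to \Cref{lem:matching-sparse-levi},'' not the claim you made.
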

\begin{proof}
    If $v$ is matched in $M$, we can return the edge in $O(1)$ time since we have access to this maximal matching. The rest of the proof follows from \Cref{lem:second-alg-time} and the fact that the algorithm in \Cref{prop:rgmm} can return the edge of the matched vertex in the same running time.
\end{proof}

Local computation algorithms (LCA) is a model of computation, also motivated by large data sets, in which the algorithm is not expected to produce the entire output at once. Instead, the algorithm is queried for parts of the output, and must produce a consistent and approximately optimal output. We use the following local computation algorithm (LCA) by \cite{LeviRY17} to design our algorithm. 

\begin{proposition}[\cite{LeviRY17}]\label{prop:lca}
    There exists a $(1-\epsilon)$-approximate local computation algorithm for maximum matching of graph $G$ in $\wt{O}(\Delta(G)^{1/\epsilon^2})$ time with access to the adjacency list of $G$.
\end{proposition}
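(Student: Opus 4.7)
The plan is to combine the classical Hopcroft--Karp short-augmenting-paths framework with the local computation algorithm (LCA) paradigm of Nguyen--Onak and Yoshida--Yamamoto--Ito. First I would reduce approximation quality to the absence of short augmenting paths: a matching $M$ is a $(1-\epsilon)$-approximation of $\mu(G)$ whenever it admits no augmenting path of length at most $2/\epsilon - 1$. This is the standard symmetric-difference argument: $M \oplus M^*$ decomposes into vertex-disjoint alternating paths and cycles, and each augmenting path of length $2\ell+1$ contributes $\ell+1$ edges to $M^*$ and $\ell$ edges to $M$, so if every augmenting path has $\ell \ge 1/\epsilon$, then $|M^*| - |M| \le \epsilon |M|$.

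Next I would construct the matching in $1/\epsilon$ phases. Set $M_0 = \emptyset$; in phase $i$, compute a maximal collection $\mathcal{P}_i$ of vertex-disjoint $M_{i-1}$-augmenting paths of length at most $2i-1$, and take $M_i \gets M_{i-1} \oplus \bigcup \mathcal{P}_i$. After $1/\epsilon$ phases, the matching $M_{1/\epsilon}$ has no augmenting path of length $\le 2/\epsilon - 1$ and is thus a $(1-\epsilon)$-approximation. To make each phase locally computable, I would assign every edge an i.i.d.\ uniform random rank and select augmenting paths greedily by lexicographic order of the ranks along the path, breaking ties canonically. This makes ``path $P$ belongs to $\mathcal{P}_i$'' a well-defined function of the ranks in a bounded neighborhood.

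To answer a query ``is edge $e$ in $M_{1/\epsilon}$?'' I would recurse on the phase index. For $M_i$, enumerate all paths of length $\le 2i-1$ through the endpoints of $e$; for each such $P$, recursively check (using the $M_{i-1}$ oracle) whether $P$ is currently $M_{i-1}$-augmenting, and then use the ranks to decide which of the overlapping augmenting paths wins the greedy selection. Each call inspects a $(2i-1)$-ball, contributing a factor $\Delta^{O(1/\epsilon)}$ to the cost; nesting this across the $1/\epsilon$ phases yields the total $\Delta^{O(1/\epsilon^2)}$ bound, matching the claim up to polylogarithmic factors.

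The main obstacle is bounding the size of the recursion tree. A naive count branches on every conflicting path and can blow up super-exponentially. The fix, following Levi--Rubinfeld--Yodpinyanee, is a Beck-style analysis of the random ranks in the spirit of the LCA maximal independent set results: with high probability the dependency tree at each recursion level is geometrically bounded, so the expected exploration stays $\Delta^{O(1/\epsilon)}$ per level. The non-bipartite case requires handling blossoms when identifying augmenting paths, but this only affects constants in the ball radius and leaves the $\wt{O}(\Delta^{1/\epsilon^2})$ bound intact. Once this is done, the LCA directly outputs, for each queried edge, its status in the $(1-\epsilon)$-approximate matching $M_{1/\epsilon}$.
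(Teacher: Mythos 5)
The paper does not prove this proposition; it imports it verbatim from~\cite{LeviRY17}, so there is no in-paper proof to compare against. Your sketch is a reasonable high-level reconstruction of the technique behind the cited result: Hopcroft--Karp phase elimination of short augmenting paths, a locally computable greedy over random edge ranks to pick a maximal disjoint collection of such paths per phase, and a query-tree bound in the spirit of Yoshida--Yamamoto--Ito to control the recursion. Two places where the sketch underplays real work in~\cite{LeviRY17}: first, the non-bipartite case is not a matter of ``constants in the ball radius''---locally deciding whether a short \emph{alternating} walk can be upgraded to an augmenting path in the presence of odd cycles requires handling blossoms explicitly, and this is one of the genuine contributions of the LCA matching line of work; second, the recursion bound is not really Beck-style LLL but the Yoshida--Yamamoto--Ito / Onak--Ron--Rosen--Rubinfeld style bound on the expected size of the random-rank query tree, adapted by Levi--Rubinfeld--Yodpinyanee to give the per-query (rather than merely expected) polylogarithmic overhead. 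Neither affects the headline $\wt{O}(\Delta^{1/\epsilon^2})$ shape, but a complete proof would need to engage with both.
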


Now we prove the main technical part of this section that can be used to estimate $\mu(M' \cup B_1)$ and $\mu(M \cup B_2)$.

\begin{lemma}\label{lem:matching-sparse-levi}
    Let $H$ be a subgraph of graph $G$. Suppose that $H$ is the union of a constant number of random greedy maximal matching on different subsets of vertices. Also, we have oracle access to edges of random greedy maximal matching. We can query a vertex to obtain the matching edge of vertex $v$ in $\wt{O}(T)$ expected time.  Moreover, the maximum degree of $H$ is constant. Then, there exists a $(1-\epsilon)$-approximate algorithm that estimates the size of the maximum matching of $H$ in $\wt{O}(T)$ expected time.
\end{lemma}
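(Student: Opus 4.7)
The plan is to apply the local computation algorithm (LCA) of \Cref{prop:lca} to $H$, combined with uniform sampling to estimate the size of its maximum matching. Since $\Delta(H) = O(1)$ by assumption, \Cref{prop:lca} produces a $(1-\epsilon)$-approximate maximum matching $M^*$ of $H$ in $\wt{O}(\Delta(H)^{1/\epsilon^2}) = \wt{O}(1)$ time per vertex query, provided we can simulate adjacency list access to $H$.

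The first step is to simulate such adjacency list access using the given oracles. Because $H$ is the union of a constant number of random greedy maximal matchings and has constant maximum degree, each vertex $v$ has $O(1)$ neighbors in $H$, and each such neighbor lies in exactly one of the component matchings. To enumerate the neighbors of $v$ in $H$, I would query each of the $O(1)$ matching oracles on $v$ to retrieve the (at most one) edge of that matching incident to $v$, and collect the other endpoints. Each oracle call costs $\wt{O}(T)$ expected time by hypothesis, so a single adjacency list lookup at $v$ in $H$ costs $\wt{O}(T)$ expected time.

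Plugging this simulated access into the LCA of \Cref{prop:lca}, each decision of whether a given vertex is matched in $M^*$ requires $\wt{O}(1)$ elementary adjacency accesses to $H$, each costing $\wt{O}(T)$, for a total of $\wt{O}(T)$ expected time per LCA query. To estimate $|M^*|$, I would follow the standard sampling scheme used in \Cref{alg:first-case} and \Cref{alg:second-case}: pick $r = \Theta(\log^3 n)$ vertices of $H$ uniformly at random, use the LCA to determine for each whether it is matched in $M^*$, let $W$ be the count of matched samples, and output $\tfrac{n W}{2 r}$ (with a small subtractive correction analogous to $\tfrac{n}{2 \log n}$ in the earlier algorithms). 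A Chernoff bound gives concentration of this estimate around $|M^*|$ up to $o(n)$ additive error with high probability, and since $|M^*| \geq (1-\epsilon) \mu(H)$ by \Cref{prop:lca}, the output is a $(1-\epsilon)$-approximation of $\mu(H)$ (absorbing the $o(n)$ term into the approximation guarantee as elsewhere in the paper). The total expected running time is $r \cdot \wt{O}(T) = \wt{O}(T)$.

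The main obstacle is the adjacency simulation step, which crucially uses both the constant bound on $\Delta(H)$ and the ability of the matching oracles to return the matched edge, not merely whether a vertex is matched; once this reduction is set up, the LCA and the sampling-plus-Chernoff argument proceed essentially by citation.
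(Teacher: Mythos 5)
Your high-level plan matches the paper's: run the $(1-\epsilon)$-LCA of \Cref{prop:lca} on $H$, simulate adjacency-list access to $H$ by querying the constant number of matching oracles, and then sample $r = \wt O(1)$ vertices and use a Chernoff bound. The simulation step you describe (each neighbor lookup costs one oracle call, and the matched edge, not just the matched/unmatched bit, is needed) is correct and is exactly what the paper does.

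However, there is a real gap in your running-time argument. You write that each adjacency access costs $\wt{O}(T)$ and that the LCA makes $\wt{O}(1)$ of them, ``for a total of $\wt{O}(T)$ expected time per LCA query.'' This treats $\wt{O}(T)$ as a per-vertex cost, but the hypothesis (via \Cref{lem:edge-oracle-1,lem:edge-oracle-2}, themselves via \Cref{prop:rgmm} and \Cref{prop:outdegree-bound}) only guarantees $\wt{O}(T)$ \emph{in expectation over a uniformly random query vertex and random permutation}. Individual vertices can have much larger oracle cost. The vertices queried by the LCA are not uniform: the LCA starts at a random $u$ but then explores the local neighborhood of $u$ \emph{in $H$}, and $H$ itself depends on the same permutation $\pi$ that drives the RGMM oracle. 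So the LCA's oracle queries are biased and correlated with the oracle's internal randomness, and you cannot simply invoke the expected-time guarantee query by query. This is precisely ``Challenge (4)'' flavor of issue that the paper flags here explicitly. The paper's proof handles it by a charging argument: the LCA explores only vertices within distance $\Delta(H)^{1/\epsilon^3}$ of $u$, of which there are at most $\Delta(H)^{\Delta(H)^{1/\epsilon^3}} = O(1)$; summing the oracle cost $F(v,\pi)$ over all vertices $v$ visited across all choices of random starting vertex $u$, each $v$ appears in at most $O(1)$ local neighborhoods, so the total averaged over a random $u$ is $O(1)\cdot \E_{v,\pi}[F(v,\pi)] = \wt{O}(T)$. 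You should add this charging step (or an equivalent argument that converts the uniform-query guarantee into a worst-case-over-LCA-queries bound) to make the time analysis sound.
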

\begin{proof}
    We run the algorithm of \Cref{prop:lca}. Each time the algorithm visits a new vertex $u$, we first query all the constant number of oracles for random greedy maximal matching to get the adjacency list of $u$ in $H$. If the algorithm in \Cref{prop:lca} made uniform queries to the oracle, then we could conclude the proof since $\Delta(H)$ and $\epsilon$ are constant. However, note that queries to the oracles are not independent and we have an expected time guarantee on the running time of the oracles. So it is possible that the way that the algorithm of \Cref{prop:lca} works might result in making biased queries to oracles.

    Let $\mathcal{L}(u, \pi)$ be the set of vertices the algorithm of \Cref{prop:lca} visits when we run the algorithm for vertex $u$ and we use permutation $\pi$ for random greedy maximal matchings of the subgraph $H$. Let $F(u,\pi)$ be the running time of the oracle for vertex $u$ and permutation $\pi$. Also, let $L(u, \pi)$ be the running time of the algorithm of \Cref{prop:lca} on vertex $u$ using permutation $\pi$ for random greedy maximal matchings. Hence, we have
    \begin{align*}
        \E_{u,\pi}[F'(u,\pi)] = \sum_\pi \sum_u \sum_{v \in \mathcal{L}(u, \pi)} \frac{\E[F(v,\pi)]}{n\cdot|E(G)|!}
    \end{align*}
    Further, the algorithm in \Cref{prop:lca} explores the local neighborhood of a vertex to answer each query. Therefore, if two vertices $w$ and $z$ are at a distance greater than $\Delta(H)^{1/\epsilon^3}$, the algorithm does not visit $z$ when answering a query for $w$. Thus,
    \begin{align*}
        \E_{u,\pi}[F'(u,\pi)] &\leq \sum_\pi \sum_v \Delta(H)^{\left(\Delta(H)^{1/\epsilon^3}\right)}\cdot \frac{\E[F(v,\pi)]}{n\cdot|E(G)|!}\\
        & = \Delta(H)^{\left(\Delta(H)^{1/\epsilon^3}\right)} \cdot \sum_\pi \sum_v \frac{\E[F(v,\pi)]}{n\cdot|E(G)|!}\\
        & \leq \wt{O}(T),
    \end{align*}
    where the first inequality follows by the fact that each vertex has at most $\Delta(H)^{\left(\Delta(H)^{1/\epsilon^3}\right)}$ neighbors in $H$ within a distance of $\Delta(H)^{1/\epsilon^3}$, and the last inequality because $\Delta(H)$ and $\epsilon$ are constants. Therefore, even with the biased queries, we can implement \Cref{prop:lca} on subgraph $H$ in $\wt{O}(T)$ time, which completes the proof.
\end{proof}

\begin{lemma}\label{lem:muMB}
    There exists an algorithm that outputs a $(1-\epsilon)$-approximate estimation of the value of $\mu(M \cup B_2)$ in $\wt{O}(\bar{d}(G))$ expected time.
\end{lemma}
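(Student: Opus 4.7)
The plan is to reduce this to a direct application of \Cref{lem:matching-sparse-levi} on the graph $H := M \cup B_2$. I would first verify that $H$ satisfies the structural hypotheses of that lemma. Since $M$ is an ordinary matching, each vertex has at most one incident edge from $M$. Since $B_2$ is a $b$-matching with capacities $k$ on $V(M)$ and $kb$ on $V \setminus V(M)$, each vertex has at most $kb$ incident edges from $B_2$. Thus $\Delta(H) \le 1 + kb = O(1)$, which is constant as $k$ and $b$ are constants. Moreover, $H$ can be viewed as the union of $M$ and $B_2$, the latter being obtained from a random greedy maximal matching in the blown-up graph $G'_2$; together these constitute a constant number of matching-like objects on subsets of $V(G)$ as required by the framework of \Cref{lem:matching-sparse-levi}.

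Next I would supply the oracle that returns, for any queried vertex $v$, the list of edges incident to $v$ in $H$. The edges of $v$ in $M$ are available in $O(1)$ time because $M$ is materialized explicitly during sparsification; the edges of $v$ in $B_2$ are produced by \Cref{lem:edge-oracle-2} in $\wt{O}(\bar{d}(G))$ expected time. Concatenating the two gives the full neighborhood of $v$ in $H$ within $\wt{O}(\bar{d}(G))$ expected time, which matches the $T = \bar{d}(G)$ cost parameter of \Cref{lem:matching-sparse-levi}.

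With these two ingredients in place, I would invoke \Cref{lem:matching-sparse-levi} directly to obtain a $(1-\epsilon)$-approximate estimate of $\mu(H) = \mu(M \cup B_2)$, with expected runtime $\wt{O}(T) = \wt{O}(\bar{d}(G))$. The main obstacle to check is that the queries generated by the Levi--Rubinfeld--Yoshida LCA inside \Cref{lem:matching-sparse-levi} may be biased (not uniformly random) with respect to the underlying RGMM oracles for $B_2$; however, this is precisely what the proof of \Cref{lem:matching-sparse-levi} already handles via the constant-radius locality of the LCA, so no new analysis is needed here and the bound transfers verbatim.
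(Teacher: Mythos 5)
Your proof is correct and takes essentially the same route as the paper: the paper's proof of this lemma is a one-liner that plugs \Cref{lem:edge-oracle-2} into \Cref{lem:matching-sparse-levi} and runs the resulting estimator for $r$ random-vertex samples, which is exactly what you do (with a bit more detail on verifying $\Delta(M \cup B_2) \le 1 + kb = O(1)$ and on the $O(1)$-time lookup of $M$-edges).
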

\begin{proof}
    The proof follows by plugging \Cref{lem:edge-oracle-2} into \Cref{lem:matching-sparse-levi} and running the algorithm for $r$ samples.
\end{proof}

\begin{lemma}\label{lem:muMMBB}
    There exists an algorithm that outputs a $(1-\epsilon)$-approximate estimation of the value of $\mu(M' \cup B_1)$ in $\wt{O}(\bar{d}(G) \cdot \sqrt{n})$ expected time.
\end{lemma}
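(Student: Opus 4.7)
The plan is to mirror the proof of \Cref{lem:muMB}, simply swapping in \Cref{lem:edge-oracle-1} for \Cref{lem:edge-oracle-2} when invoking \Cref{lem:matching-sparse-levi}. First I would verify the two preconditions of \Cref{lem:matching-sparse-levi} for the subgraph $H := M' \cup B_1$. Condition (a): $H$ is the union of a constant number of random greedy maximal matchings, namely $M'$ (a random greedy maximal matching in $G[V \setminus V(M)]$) and $B_1$ (a random greedy maximal $b$-matching in $G_1'$, which, as noted after \Cref{alg:sublinear-first}, can be viewed as a random greedy maximal matching on the vertex-duplicated graph). Condition (b): the maximum degree of $H$ is constant, since each vertex has degree at most $1$ in $M'$ and at most $\lceil kb \rceil$ in $B_1$, and both $k$ and $b = 1+\sqrt{2}$ are constants.

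Next, the oracle access demanded by \Cref{lem:matching-sparse-levi}---namely, querying any vertex $v$ for its matching edges in $M'$ and in $B_1$ and thereby recovering its neighbors in $H$---is provided directly by \Cref{lem:edge-oracle-1} in expected time $T := \bar{d}(G) \cdot \sqrt{n}$. Plugging this $T$ into \Cref{lem:matching-sparse-levi} yields a $(1-\epsilon)$-approximate estimate of $\mu(H) = \mu(M' \cup B_1)$ in $\wt{O}(T) = \wt{O}(\bar{d}(G) \cdot \sqrt{n})$ expected time, which matches the claimed bound. As in the proof of \Cref{lem:muMB}, one then runs the estimator on $r = \wt{O}(1)$ samples if an additive concentration bound on the matching size is desired; this does not affect the $\wt{O}(\cdot)$ runtime.

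I do not expect serious obstacles here: structurally this is a routine sibling of \Cref{lem:muMB}, and the heavy lifting (handling the biased, non-uniform queries issued by the LCA of \Cref{prop:lca} when composed with a random-greedy oracle) has already been absorbed into the proof of \Cref{lem:matching-sparse-levi}. The one subtle point to flag is that the RGMM oracles being composed here live on two different domains---$G[V \setminus V(M)]$ for $M'$ and the duplicated bipartite graph $G_1'$ for $B_1$---rather than on $G$ itself; but these are exactly the nested oracles constructed in \Cref{alg:first-case}, and \Cref{lem:edge-oracle-1} already packages them with the correct expected-time guarantee, so the composition goes through without any additional work.
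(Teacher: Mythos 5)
Your proposal matches the paper's proof, which is exactly the one-line argument of plugging \Cref{lem:edge-oracle-1} into \Cref{lem:matching-sparse-levi} and running on $r$ samples. The extra detail you supply (verifying the constant-degree and constant-union hypotheses, and noting that $B_1$ is an RGMM on the duplicated graph $G_1'$) is correct and simply unpacks what the paper leaves implicit.
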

\begin{proof}
    The proof follows by plugging \Cref{lem:edge-oracle-1} into \Cref{lem:matching-sparse-levi} and running the algorithm for $r$ samples.
\end{proof}

\begin{theorem}
    There exists an algorithm that, given access to the adjacency list of a graph, estimates the size of the maximum matching with a multiplicative-additive approximation factor of $(0.5109, o(n))$ and runs in $\widetilde{O}(n\sqrt{n})$ time with high probability.
\end{theorem}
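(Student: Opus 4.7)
The plan is to adapt \Cref{alg:sublinear-first} to general graphs by replacing the two fractional estimates with direct estimates of $\mu(M'\cup B_1)$ and $\mu(M\cup B_2)$. Concretely, I would set $\mu_1 := |M| + \wt{\mu}(M'\cup B_1)$ and $\mu_2 := \wt{\mu}(M\cup B_2)$, where $\wt{\mu}(\cdot)$ denotes the $(1-\epsilon)$-approximate estimators delivered by \Cref{lem:muMMBB} and \Cref{lem:muMB} respectively, and return $\max(\mu_1,\mu_2)$. The matchings $M, M', B_1, B_2$ and the nested RGMM oracles used to access them are constructed exactly as in \Cref{alg:first-case,alg:second-case}, and \Cref{alg:sparsification} is invoked unchanged.

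For the approximation guarantee, the key lever is \Cref{lem:apx051general}, which already lower-bounds $(1-\epsilon)\max\bigl[|M|+\E[\mu(M'\cup B_1)],\,\E[\mu(M\cup B_2)]\bigr]$ by $0.5109\cdot\mu(G)$. To pass from expectations to actual estimates, I would repeat the argument of \Cref{lem:approx-guarantee} almost verbatim: by \Cref{lem:muMB} and \Cref{lem:muMMBB} each $\wt{\mu}$ is a $(1-\epsilon)$-approximation to its true value, and averaging $r = \mathrm{polylog}(n)$ independent samples together with a Chernoff bound concentrates them around their expectations up to an additive $o(n)$. Folding the $\epsilon$ factor into the constant (by taking $\epsilon$ small enough that the approximation ratio remains strictly above $0.5109$) and using the trivial upper bounds $\mu(M'\cup B_1),\mu(M\cup B_2)\le\mu(G)$, this yields the $(0.5109, o(n))$ guarantee with high probability.

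For the running time, sparsification costs $\wt{O}(n\sqrt{n})$ by \Cref{clm:sparsification-time}. Each call to the estimator of \Cref{lem:muMMBB} costs $\wt{O}(\bar{d}(G)\cdot\sqrt{n}) = \wt{O}(n\sqrt{n})$ in expectation, each call to the estimator of \Cref{lem:muMB} costs $\wt{O}(\bar{d}(G)) = \wt{O}(n)$ in expectation, and only $\mathrm{polylog}(n)$ calls are made. A standard $O(\log n)$ parallel-copy-plus-Markov argument, exactly as in \Cref{lem:final-time-1}, promotes the expectation bound to a high-probability bound of $\wt{O}(n\sqrt{n})$.

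The main subtlety I anticipate is verifying that the hypotheses of \Cref{lem:matching-sparse-levi} are met by the subgraphs $H \in \{M\cup B_2,\, M'\cup B_1\}$: each $H$ must be a union of constantly many random greedy maximal matchings of constant maximum degree, with efficient per-vertex edge-oracle access. The RGMM structure is inherited from \Cref{alg:first-case,alg:second-case}; the degree bound follows from the fact that $M$ and $M'$ are matchings (degree $1$) and $B_1, B_2$ are $b$-matchings with constant capacities $k$ and $kb$; and the edge-oracle access is exactly what \Cref{lem:edge-oracle-1,lem:edge-oracle-2} provide. Once these are checked, the theorem follows by composing \Cref{lem:apx051general,lem:muMB,lem:muMMBB,clm:sparsification-time} with the concentration and parallel-repetition steps, mirroring the bipartite theorem.
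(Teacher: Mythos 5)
Your proposal is correct and follows essentially the same route as the paper: replace the fractional estimates with direct estimates of $\mu(M'\cup B_1)$ and $\mu(M\cup B_2)$ via the LCA-based estimators of \Cref{lem:muMMBB} and \Cref{lem:muMB}, invoke \Cref{lem:apx051general} for the approximation ratio, concentrate with a Chernoff bound as in \Cref{lem:approx-guarantee}, and lift the expected running time to a high-probability bound by $O(\log n)$ parallel repetitions plus Markov's inequality. Your explicit check that the hypotheses of \Cref{lem:matching-sparse-levi} are satisfied (constant max degree from $k,kb$ being constants; edge-oracle access from \Cref{lem:edge-oracle-1,lem:edge-oracle-2}) is the same content the paper carries implicitly through \Cref{lem:muMB,lem:muMMBB}.
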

\begin{proof}
    By \Cref{lem:apx051general} and a Chernoff bound similar to \Cref{lem:approx-guarantee}, we achieve a multiplicative-additive approximation of $(0.5109, o(n))$. Moreover, the expected running time is $\wt{O}(n\sqrt{n})$ by \Cref{clm:sparsification-time}, \Cref{lem:muMB}, and \Cref{lem:muMMBB}. To establish a high probability bound on the time complexity, we execute $O(\log n)$ instances of the algorithm in parallel and halt as soon as the first one completes. By applying Markov’s inequality, we deduce that each individual instance terminates within $O(n\sqrt{n})$ time with constant probability. As a result, with high probability, at least one of these instances finishes within $O(n\sqrt{n})$ time. This concludes the proof.
\end{proof}

\section{Multiplicative Approximation}\label{sec:multiplicative}
In this section, we show that we can achieve a multiplicative approximation guarantee by slightly increasing the number of samples in \Cref{alg:first-case} and \Cref{alg:second-case}. First, we prove a simple lower bound for the size of the maximum matching of a graph based on its maximum and average degree.

\begin{claim}\label{clm:mu-delta-d}
    For any graph $G$, it holds that $\mu(G) \geq n\bar{d}(G) / (4\Delta(G))$.
\end{claim}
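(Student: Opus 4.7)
The plan is to use the standard fact that any maximal matching lower-bounds $\mu(G)$ and covers every edge via at least one endpoint. Let $M$ be any maximal matching of $G$ (its existence and size are what we analyze; we do not need to construct it). By maximality, every edge $e \in E(G)$ has at least one endpoint in $V(M)$, since otherwise $M \cup \{e\}$ would still be a matching.

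Next I would double-count edges incident to $V(M)$. Since $|V(M)| = 2|M|$ and each vertex has degree at most $\Delta(G)$, the total number of edges with at least one endpoint in $V(M)$ is at most
\[
\sum_{v \in V(M)} \deg_G(v) \;\le\; 2|M|\,\Delta(G).
\]
Combined with the maximality observation, this gives $|E(G)| \le 2|M|\,\Delta(G)$. Rearranging and using $|E(G)| = n\bar{d}(G)/2$, we obtain
\[
|M| \;\ge\; \frac{|E(G)|}{2\Delta(G)} \;=\; \frac{n\,\bar{d}(G)}{4\,\Delta(G)}.
\]
Finally, since $\mu(G) \ge |M|$ for any matching $M$, the desired bound follows.

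There is no real obstacle here; this is a short, standard argument and the proof is essentially a one-liner once the maximality-covers-all-edges fact is invoked.
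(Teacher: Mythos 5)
Your argument is correct, but it takes a different route than the paper's. You observe that a maximal matching $M$ covers every edge with at least one endpoint, then double-count: $|E(G)| \le \sum_{v \in V(M)} \deg_G(v) \le 2|M|\Delta(G)$, so $\mu(G) \ge |M| \ge |E(G)|/(2\Delta(G))$. The paper instead greedily edge-colors $G$ with $2\Delta(G)$ colors (any graph of maximum degree $\Delta$ has chromatic index at most $2\Delta - 1$, and $2\Delta$ is even easier to see greedily), notes that each color class is a matching, and concludes by pigeonhole that some color class has at least $|E(G)|/(2\Delta(G))$ edges. Both arguments are short and elementary and land on the identical intermediate bound $\mu(G) \ge |E(G)|/(2\Delta(G))$ before substituting $|E(G)| = n\bar d(G)/2$. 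Your version is arguably the more standard textbook move (maximal matching as vertex cover of size $2|M|$), while the paper's edge-coloring proof is a touch slicker in that it exhibits the matching directly as a color class rather than reasoning via maximality; neither buys anything substantive over the other for this particular claim.
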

\begin{proof}
    Any graph with a maximum degree of $\Delta(G)$ can be colored greedily using $2\Delta(G)$ colors. Furthermore, the edges of each color create a matching. Thus, we have $\mu(G) \geq |E(G)| / (2\Delta(G))$. Combining with $|E(G)| = n\bar{d}(G)/2$, we can conclude the proof.
\end{proof}

The goal is to obtain a multiplicative approximation guarantee of $(0.5109 - \epsilon) \mu(G)$. It is important to note that if any of \( |M| \), \( |M'| \), \( |B_1| \), or \( |B_2| \) is not a constant fraction of the others, it can be omitted from the equation in the statement of \Cref{lem:apx051bipartite} without affecting the approximation by more than a function of \( \epsilon \). Thus, without loss of generality, we can assume that $|M| = \Omega(\mu(G))$, $|M'| = \Omega(\mu(G))$, $|B_1| = \Omega(\mu(G))$, and $|B_2| = \Omega(\mu(G))$. Consequently, using \Cref{clm:mu-delta-d} and as an application of Chernoff bound, we can use $\wt{O}_\epsilon(\Delta(G)/\bar{d}(G))$ samples in \Cref{alg:first-case} and \Cref{alg:second-case} to obtain multiplicative estimation of $\mu_{M'}$, $\mu_{B_1}$, and $\mu_{B_2}$.

By \Cref{lem:first-alg-time}, \Cref{alg:first-case} runs in $\wt{O}(\bar{d}(G) \cdot \sqrt{n})$ time when we have $r = \wt{O}(1)$ samples. By increasing the number of samples to $\wt{O}_\epsilon(\Delta(G)/\bar{d}(G))$, the running time of \Cref{alg:first-case} increases to $\wt{O}(\Delta(G) \cdot \sqrt{n})$. Moreover, by \Cref{lem:second-alg-time}, \Cref{alg:second-case} runs in $\wt{O}(\bar{d}(G))$ time when we have $r = \wt{O}(1)$ samples. By increasing the number of samples to $\wt{O}_\epsilon(\Delta(G)/\bar{d}(G))$, the running time of \Cref{alg:second-case} increases to $\wt{O}(\Delta(G))$. Therefore, the total running time of the algorithm is within $\wt{O}(n \sqrt{n})$.

Finally, we can obtain the degree of each vertex in the graph using binary search. Therefore, we can assume that we have access to $\Delta(G)$ and $\bar{d}(G)$ by spending $\wt{O}(n)$ time.
Thus we get:

\maintheoremAdjlist*

\section{Algorithm with Access to the Adjacency Matrix}\label{sec:matrix}

In this section, we use a simple reduction to show that with a small modification, our algorithm  can be adapted to the setting where we have access to the graph's adjacency matrix. A slightly different version of this kind of reduction appeared in previous works on sublinear time algorithms for maximum matching \cite{Behnezhad21, BehnezhadRRS23}.

It is important to note that obtaining a constant-factor multiplicative approximation is impossible when the algorithm only has access to the adjacency matrix of the graph. This is because if the graph is guaranteed to contain either a single edge or be completely empty, any algorithm would require $\Omega(n^2)$ adjacency matrix queries to distinguish between these two cases. Consequently, we allow the algorithm to have an additive error of $o(n)$ in addition to the multiplicative approximation ratio.

We build an auxiliary graph $H$ with the following vertex set and edge set:
\begin{itemize}
    \item \textbf{Vertex set:} $V(H)$ contains $n+2$ disjoint sets of $n$ vertices $V_1, V_2$, and $U_1, \ldots, U_n$. Each $V_i$ is a copy of the vertices of the original graph. Each $U_i$ contains $n \log^2 n$ vertices.
    \item \textbf{Edge set:} For each vertex $v \in V_1$, the $i$-th neighbor of $v$ is the $i$-th vertex in $V_1$ if $(v, i) \in E(G)$, and otherwise it is the $i$-th vertex in $V_2$. Similarly, for each vertex $v \in V_2$, the $i$-th neighbor of $v$ is $i$-th vertex in $V_2$ if $(v, i) \in E(G)$, and otherwise it is the $i$-th vertex in $V_1$. Also, each $v \in V_2$ is connected to all vertices of $U_v$. As a result, the degree of each vertex in $U_1 \cup U_2 \cup \ldots \cup U_n$ is 1, the degree of each vertex in $V_2$ is $n$, and the degree of each vertex in $V_2$ is $n + n\log^2 n$. Therefore, we have $\Delta(H) = \wt{O}(n)$.
\end{itemize}

Because of the way we constructed the graph, it is not hard to see that we can find the $i$-th neighbor of the adjacency list of each vertex in $H$ using only a single query to the adjacency matrix of $G$.

\begin{observation}
    For each vertex $v$ in graph $H$, the $i$-th neighbor of $H$ can be found using at most a single adjacency matrix query in $G$.
\end{observation}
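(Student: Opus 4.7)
The plan is a straightforward case analysis on which block of the vertex partition $V_1 \cup V_2 \cup U_1 \cup \cdots \cup U_n$ the queried vertex $v$ lies in, reading off the answer directly from the construction of $H$.

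First I would handle $v \in V_1$. The construction stipulates that the $i$-th neighbor of $v$ in $H$ is the $i$-th vertex of $V_1$ exactly when $(v,i) \in E(G)$, and the $i$-th vertex of $V_2$ otherwise. The oracle for $H$ can therefore answer with the single adjacency-matrix query $(v,i)$ to $G$: one bit of information determines which of the two deterministic candidates to return. The case $v \in V_2$ with $i \le n$ is completely symmetric and again requires one query $(v,i)$.

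For the remaining cases, no adjacency-matrix query to $G$ is needed at all. If $v \in V_2$ and $n < i \le n + n\log^2 n$, then by construction the $(i-n)$-th extra neighbor is the $(i-n)$-th vertex of the block $U_v$, which is specified purely by the indexing of $H$ and known without any query. Similarly, if $v \in U_j$, then $v$ has a unique neighbor, namely the $j$-th vertex of $V_2$, again determined by the construction alone.

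The only thing one needs to be careful about is fixing a consistent convention for the ordering of the adjacency list of each vertex of $V_2$, so that the "copy from $G$" part (indices $1,\dots,n$) and the "extension into $U_v$" part (indices $n+1,\dots,n+n\log^2 n$) are in a prescribed order; once this is fixed, the translation from an index $i$ to the correct block is immediate. There is no real obstacle here: this observation is essentially a restatement of how $H$ was defined, and in every case the oracle uses at most one query to the adjacency matrix of $G$.
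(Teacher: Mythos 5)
Your proof is correct and follows essentially the same case analysis as the paper's own argument: one query $(v,i)$ resolves the case $v\in V_1$ (and symmetrically $v\in V_2$ with $i\le n$), and the remaining cases ($v\in V_2$ with $i>n$, $v\in U_j$, or $i$ beyond the degree) are answered with zero queries directly from the construction. The only cosmetic difference is that the paper leads with the out-of-range case $i>\deg_H(v)$, which you cover implicitly by bounding the index ranges in each block.
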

\begin{proof}
    If $i > \deg_H(v)$, we can simply certify this since the degree of each vertex in $H$ is determined by the vertex set to which $v$ belongs. If $v \in U_j$, then the only neighbor of $v$ is $j$ in $V_2$. If $v \in V_1$, we make an adjacency matrix query for $(v, i)$, and based on the result, the $i$-th neighbor is either vertex $i$ in $V_1$ or in $V_2$. A similar procedure applies for $v \in V_2$ when $i \leq n$. If $i > n$ for $v \in V_2$, we return the $(i - n)$-th vertex in $U_v$.  
\end{proof}

\paragraph{Modification to the algorithm:} Since the graph contains $\wt{\Theta}(n^2)$ vertices, we cannot afford to apply the sparsification step to all vertices. However, vertices in $U_1 \cup \ldots \cup U_n$ have degree 1. Therefore, we apply the sparsification step only to vertices in $V_1$ and $V_2$. Since we have $|V_1| + |V_2| = 2n$, we can apply the sparsification for these sets in $\wt{O}(n\sqrt{n})$ time. We first iterate over the vertices in $V_2$ and apply the sparsification step, and then we apply it to the vertices in $V_1$. This ordering ensures that most vertices in $V_2$ get matched to vertices in $U_1 \cup \ldots \cup U_n$ in this step, which is desirable for our application.

\begin{claim}
    After the sparsification step, each vertex in $V_2$ is matched by $M$ with high probability. Moreover, at most $n/\log n$ vertices in $V_2$ are matched to vertices in $V_1 \cup V_2$ with high probability.
\end{claim}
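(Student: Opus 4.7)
The plan is to exploit the fact that every vertex $v \in V_2$ has an enormous \emph{private} neighborhood $U_v$ of size $n\log^2 n$ whose vertices are always unmatched, since each has degree $1$ in $H$ (connecting only to $v$). Because $V_2$ is processed before $V_1$ in the modified sparsification, for each draw from $N(v)$ the probability of landing in $U_v$ is $\frac{n\log^2 n}{n + n\log^2 n} = 1 - \frac{1}{1+\log^2 n}$.

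For the first part, I would show that a fixed $v \in V_2$ that reaches its processing step still unmatched becomes matched with overwhelming probability: if any of the $c = 2\sqrt{n}\log n$ samples drawn for $v$ lands in $U_v$, then since all of $U_v$ is unmatched, $v$ is matched. The probability that all $c$ samples avoid $U_v$ is at most $\bigl(\frac{1}{1+\log^2 n}\bigr)^{2\sqrt n\log n} = n^{-\omega(1)}$. A union bound over the at most $n$ vertices in $V_2$ yields the first claim.

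For the second part, condition on the high-probability event from the first part so that all of $V_2$ is matched by the end of $V_2$'s sparsification pass. Then every edge of $M$ with a $V_2$-endpoint and the other endpoint in $V_1 \cup V_2$ was created \emph{during} the processing of some $v \in V_2$. Now comes the crucial observation: if the very first sample $u_1^{(v)}$ happens to lie in $U_v$, then $u_1^{(v)}$ is unmatched and the algorithm immediately matches $v$ into $U_v$; therefore, for $v$ to get matched into $V_1 \cup V_2$ during its own processing, we need $u_1^{(v)} \in V_1 \cup V_2$. Since any $V_2$-vertex matched to $V_1 \cup V_2$ through another $V_2$-vertex's processing is the partner of such a $v$, the number $X$ of $V_2$-vertices matched into $V_1 \cup V_2$ satisfies $X \le 2W$, where $W := |\{v \in V_2 : u_1^{(v)} \in V_1 \cup V_2\}|$.

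Because the first-sample draws for different $v$'s are independent, $W$ is a sum of $n$ independent Bernoullis with parameter $\frac{1}{1+\log^2 n}$, so $\mathbb{E}[W] = O(n/\log^2 n)$, and a standard Chernoff bound gives $W \le 2\mathbb{E}[W]$ with high probability. Combining, $X \le 4n/\log^2 n = o(n/\log n)$, which is far stronger than the stated $n/\log n$ bound. The only real subtlety is a clean treatment of the conditioning (drawing $u_1^{(v)}$ unconditionally as a uniform neighbor, which only upper-bounds the true behavior, so the independence argument is valid), and making sure the failure event from part one is absorbed into the ``with high probability'' in part two by a final union bound.
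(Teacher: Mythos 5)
Your proof is correct and follows essentially the same plan as the paper's: exploit the enormous private neighborhood $U_v$ so that a $V_2$-vertex almost never fails to match, and it almost always matches into $U_v$, with a Chernoff bound at the end. The one small difference is in the first part: the paper deduces it by invoking \cref{lem:sparsification-maximal} (an unmatched $v\in V_2$ would have degree $\ge n\log^2 n$ in $G[V\setminus V(M)]$, contradicting the $\sqrt n$ bound), whereas you redo the direct $\left(\frac{1}{1+\log^2 n}\right)^c = n^{-\omega(1)}$ calculation -- both are fine. Your second part is actually a bit more careful than the paper's write-up, since you handle the coupling via the pre-committed $u_1^{(v)}$ draws and the factor-of-$2$ double counting explicitly.
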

\begin{proof}
    Note that we first process the vertices in $V_2$. At the time we process a vertex $v \in V_2$, it has at least $n\log^2 n$ neighbors due to its neighbors in $U_v$. Since, after sparsification, each vertex must have a degree of $\wt{O}(\sqrt{n})$ with high probability by \Cref{lem:sparsification-maximal}, all vertices in $V_2$ will be matched with high probability.

    Additionally, at the time we process a vertex $v \in V_2$, it has $n\log^2 n$ unmatched neighbors in $U_v$. On the other hand, it has at most $n$ neighbors in the rest of the graph. Thus, with a probability of at most $1/\log^2 n$, it gets matched to a vertex outside $U_v$. Therefore, in expectation, at most $n/\log^2 n$ vertices in $V_2$ are matched to vertices in $V_1 \cup V_2$. Using the Chernoff bound, we can conclude that at most $n/\log n$ vertices in $V_2$ are matched to vertices in $V_1 \cup V_2$ with high probability.
\end{proof}

Equipped with this reduction, we can now simply run the rest of the algorithm for vertices in $V_1$. The only difference is that we exclude the edges of $M$ that lie between $V_2$ and $U_1 \cup \ldots \cup U_n$ in the estimation. Additionally, in the final estimation, the algorithm returns the previous estimate minus $n/\log n$, accounting for the vertices in $V_2$ that are not matched within $V_1 \cup V_2$, which introduces an additional $o(n)$ additive error.
Thus we obtain:

\maintheoremAdjmat*

\bibliographystyle{alpha}
\bibliography{reference}

\end{document}